\newtheorem{Theorem}{Theorem}
\newtheorem{Lemma}{Lemma}[section]
\newtheorem{Remark}{Remark}[section]
\newtheorem{Corollary}{Corollary}[section]
\newtheorem{Ex}{Example}[section]
\newcommand{\be}{\begin{equation}}
\newcommand{\ee}{\end{equation}}
\newcommand{\tr}{\mathrm{tr}\,}
\newcommand{\R}{\mathbb{R}}
\newcommand{\Id}{\operatorname{Id}}
\newcommand{\ddd}{\mathrm{d}\,}
\newcommand{\diag}{\operatorname{diag}}
\newcommand{\vpd}[2]{\frac{\delta#1}{\delta#2}}
\newcommand{\pd}[2]{\frac{\partial#1}{\partial#2}}
\newcommand{\weg}[1]{}
\title{Applications of Nijenhuis Geometry IV: multi-component KdV and Camassa-Holm equations}
\author{Alexey V. Bolsinov\footnote{ School of Mathematics,
 Loughborough University,
 LE11 3TU, UK \ \ 
 \quad {\tt A.Bolsinov@lboro.ac.uk} } \quad
\& \quad  Andrey Yu. Konyaev\footnote{Faculty of Mechanics and Mathematics, Moscow State University, 119992, Moscow Russia
 \ \ \quad {\tt  maodzund@yandex.ru}} \quad \& \quad Vladimir S. Matveev\footnote{
Institut f\"ur Mathematik, Friedrich Schiller Universit\"at Jena,
07737 Jena Germany  \ \ \quad {\tt  vladimir.matveev@uni-jena.de}} 
}  
\date{}
\begin{document}

\maketitle

\begin{abstract}

We  construct  a new series of  multi-component integrable PDE systems that contains as particular examples (with appropriately chosen parameters) and generalises many famous integrable systems including KdV,  coupled  KdV \cite{fordy}, Harry Dym, coupled Harry Dym \cite{fordy2}, Camassa-Holm, multi-component Camassa-Holm \cite{ih}, Dullin-Gottwald-Holm and Kaup-Boussinesq systems. The series also contains integrable systems with no low-component analogues.   
 
\end{abstract}

\tableofcontents

\section{Introduction} \label{sect1}

In our paper we  consider  two  types of PDE systems. The first one is an $n$-{\it component evolutionary system}  of PDEs:
\begin{equation}\label{eq:1}
u^i_t = \xi^i(u), \quad i = 1, \dots, n,    
\end{equation}
where $u^i=u^i(x,t)$ are unknown functions and $\xi^i(u)$ is a differential polynomial in $u^1,\dots, u^n$, that is a  polynomial in derivatives $u^j_x, u^j_{xx}, u^j_{xxx}, \dots$, $j = 1, \dots, n$, whose coefficients are functions of $u^1, \dots, u^n$. 
For $n = 1$,  an  example of such a system is the famous  KdV equation
$$
u_t = \frac{1}{2} u_{xxx} + \frac{3}{2} u u_x.
$$
A well known two-component case  is  the Kaup-Boussinesq system, see e.g. \cite[eqn. (4)]{pav}:
$$
\begin{aligned}
u^1_t & = u^2_x - \frac{3}{2} u^1 u^1_x\, , \\ 
u^2_t & = \frac{m}{2} u^1_{xxx} - u^2 u^1_x - \frac{1}{2} u^1 u^2_x\, .
\end{aligned}
$$
where $m$ is an arbitrary constant.

We will also consider evolutionary systems of PDEs {\it with differential constraints.}  They are given by 
\begin{equation}\label{eq:1v}
\begin{aligned}
u^i_{t} & = \xi^i(u, q), \quad i = 1, \dots, n, \\
0 & = p (u, q). 
\end{aligned}    
\end{equation}
Here,  unknown functions are  $u(x,t) = (u^1(x,t), \dots, u^n(x,t))^\top$ and $q(x,t)$. Each $\xi^i(u, q)$ is a differential polynomial in $u^1,...,u^n$ and $q$,  whereas the {\it differential constraint } $p(u,q)$ is a differential polynomial in $q$ and a function in $u$ (but not in derivatives of $u$). In most cases we consider just one constraint; the only exception is Example \ref{eq:3.5}.

In all the systems that we construct and analyse, the differential constraint $p$  can be explicitly resolved w.r.t. at least one of $u^i$'s.  Solving $p(u,q)=0$ with respect to $u^i$  and substituting  $u^i=\tilde p(u^1,\dots, u^{i-1}, u^{i+1},\dots, u^n, q)$  in  the first equation of \eqref{eq:1v} allows us  to think of \eqref{eq:1v}  as a system  of $n-1$ evolutionary PDEs and one non-evolutionary PDE  
 on  $n$ unknown functions $u^1,\dots, u^{i-1}, u^{i+1},\dots, u^n$ and $q$.  
As an example, consider the  Dullin-Gottwald-Holm equation \cite{gdh} (in this case, $n=1$): 
\begin{equation}\label{eq:ch}
\begin{aligned}
u_t & = \frac{\gamma}{2}q_{xxx} + u q_x + \frac{1}{2}q u_x, \\
u & = q + \frac{m}{2} q_{xx}.
\end{aligned}
\end{equation}

Substituting the expression for $u$ into the first equation gives 
$$
q_t + \frac{m}{2} q_{xxt} = \frac{\gamma}{2} q_{xxx} + \frac{3}{2} qq_x + \frac{m}{2} q_{xx} q_x + \frac{1}{2} q q_{xxx}, 
$$
which  is a single non-evolutionary PDE  on   $q(x, t)$.

We will also deal with PDE systems given by \eqref{eq:1} with the right hand side $\xi^i$ being a formal differential series (i.e., infinite sum of monomials in derivative variables $u^j_x, u^j_{xx}, u^j_{xxx},\dots$ with coefficients being functions of $u^1,...,u^n$). We refer to such systems  as  {\it formal}  evolutionary PDEs  (systems of type  \eqref{eq:1} with $\xi^i$ being a differential polynomial will be called {\it non-formal}).

Let us explain, using the Dullin-Gottwald-Holm equation  \eqref{eq:ch}   as  an example,  
  the  relation between systems of type \eqref{eq:1v} and {  formal}  evolutionary PDEs.  Observe that  the second equation of  \eqref{eq:ch}  can be formally solved with respect to $q$:
	$$q =   u - \frac{m}{2} u_{xx} + \frac{m^2}{4} u_{xxxx} - \dots$$
	(we discuss neither convergence of this differential series nor boundary or other conditions). 
	Substituting this formal expression into the first equation of \eqref{eq:ch} leads to the formal evolutionary PDE
$$
u_t =  \frac{3}{2} u u_x + \frac{\gamma}{2} u_{xxx}  - \frac{m}{2} \Big( \frac{\gamma}{2}u_{xxxxx} - u u_{xxx} - \frac{1}{2} u_{xx} u_x \Big) + \dots,
$$
depending on $m$ as a parameter.

Recall that a (formal) evolutionary PDE-system  $u^i_\tau = \eta^i(u)$ is a {\it (formal)  symmetry} of a  (formal) evolutionary PDE-system   $u^i_t = \xi^i(u)$ (or, equivalently, these two PDE-systems  {\it  commute}) if the following {\it commutator} vanishes 
\begin{equation}\label{bracket}
\sum \limits_{j = 0}^\infty \sum_{\alpha=1}^n \Big( \pd{\xi^i}{u^\alpha_{x^j}} D^j \big(\eta^\alpha) - \pd{\eta^i}{u^\alpha_{x^j}} D^j \big(\xi^\alpha) \Big) =0.
\end{equation}
Here and below $u_{x^j}=u_{\underset{j \ \mathrm{times}}{xx...x}}$,  $D =\tfrac{d}{dx} $ is the  {\it total derivative}  in $x$ (for example, $D(u^\alpha u^\alpha _x)= (u^\alpha_x)^2 + u^\alpha u^\alpha_{xx}$) 
 and $D^j$ stands for the $j$-th power of $D$ (for example  $D^3(u^\alpha)=u^\alpha_{xxx}=u^\alpha_{x^3}$). In the  case of usual (= non-formal) PDEs,  $u^i_\tau = \eta^i(u)$ is a symmetry of $u^i_t = \xi^i(u)$  if an only if 
(at least in the analytic category)   both systems can be solved simultaneously, that is, there exist $n$ functions $u^i(x, t, \tau)$, which satisfy all  $2n$ PDEs $u^i_\tau = \eta^i(u)$ and  $u^i_t = \xi^i(u)$ .

A (formal) differential series $\mathrm v$ is said to be a {\it conservation law density}  of  the (formal) evolutionary equation $u^i_t = \xi^i(u)$ if
\begin{equation}
\label{conserve}
\mathrm v_t = \sum \limits_{j = 0}^\infty \sum_{\alpha=1}^n \pd{\mathrm v}{u^\alpha_{x^j}} D^j\big(\xi^\alpha\big) = D\mathrm w     
\end{equation}
for some formal differential series $\mathrm w$.  Such  $\mathrm v$ is defined up to addition of an arbitrary total derivative $D\tilde{\mathrm v}$. In the non-formal case, $\mathrm v$ should be understood as the  {\it density} of a conservation law: in this case, for any fast decaying or periodic 
 solution $u(t,x)$,  the integral $\int \mathrm v(u, u_x,\dots) \ddd x$   is independent of $t$ for any solution $u(x,t)$ of \eqref{eq:1}. 
We notice that both formulas \eqref{bracket} and \eqref{conserve}  `respect' the degree of differential monomials so that they are well defined for formal differential series.

The main result  of our paper is a construction  of a family of multi-component integrable PDE systems of the form \eqref{eq:1} and \eqref{eq:1v}. There are several different notions of integrability in this context in the literature. In the case of systems   \eqref{eq:1}, we construct  infinite hierarchies of (non-formal)  conservation laws  and  of (non-formal)  pairwise commuting symmetries.   The differential degrees of the conservations laws and symmetries grow within each hierarchy.

By integrability of systems  \eqref{eq:1v}, we understand the existence of infinitely many independent 
(possibly, formal) commuting symmetries and conservations laws of the formal evolutionary equation obtained from
\eqref{eq:1v} by the method demonstrated in the example above; this can be done for all systems we construct.  
 
The paper  is organised as follows. In Section \ref{sect1.1} we describe our main examples. They are parameterised  by certain discrete and continuous parameters and are of four  types.
The equations of Type II and Type IV are evolutionary, whereas those of Types I  and III are evolutionary with differential constrains.   All  of them, however, can be considered as different representatives of one single integrable system understood as a commutative algebra of evolutionary flows, which includes both formal and non-formal flows.  We give  explicit recursive formulas for common (non-formal) symmetries and conservation laws for all these PDE systems,  see Theorem \ref{thm01}.  

All the famous integrable systems listed in the abstract correspond to  certain choices of parameters. In Section  \ref{sect2.3}, we specify those parameters for each of  them. 
We also obtain other known integrable systems, e.g.,  the  Marvan-Pavlov  system \cite{pm}, which we essentially generalise. The  generalisations are given by explicit formulas and have no low-component analogues.

Theorem \ref{thm01} follows directly from a  more general construction described in Section  \ref{sect2}, see  Theorems \ref{t1},  \ref{t2}  and  \ref{t3}. Namely, Theorem \ref{t1} constructs a family of  evolutionary PDEs with differential constrains, as well as 
  formal commuting symmetries and conservations laws for them. Next, Theorem \ref{t2}   explains how one can `cook' non-formal evolutionary PDEs, non-formal symmetries and non-formal conservation laws starting from those constructed in Theorem \ref{t1}.  The constructions in Theorems \ref{t1}, \ref{t2} depend on a solution of a certain, possibly infinite,  system of PDEs on a Nijenhuis manifold $(\mathrm M^n, L)$. In Theorem \ref{t3},  we solve  this system under the additional assumption that $L$ is {\it differentially non-degenerate} and, hence, come to the integrable systems from Theorem~\ref{thm01}. 
 
Let us now comment on the circle of ideas which led us to these results.  The construction was developed within the  {\it Nijenhuis Geomery} programme \cite{nij}.  Its main ingredient is a Nijenhuis operator $L$ on a manifold $\mathrm M^n$, that is a (1,1)-tensor field $L= L^i_j $ such that its Nijenhuis torsion vanishes, i.e., 
$$
 L^2 [\nu,\eta] - L[L\nu,\eta] - L[\nu,L\eta] + [L\nu,L\eta] = 0 
$$
for arbitrary vector fields $\nu, \eta$.  In  our   recent paper \cite{appnij3},  we constructed all non-degenerate pencils of compatible $\infty$-dimensional Poisson structures of type $\mathcal{P}_3 + \mathcal{P}_1$, where  the Poisson structure $\mathcal{P}_1$ has order  1 and $\mathcal{P}_3$ is a Darboux-Poisson structure of order 3. Magri-Lenard scheme applied to these pencils  leads to certain integrable bi-Hamiltonian systems. Translating them to the language of Nijenhuis Geometry allowed us to generalise our construction further and obtain integrable systems which are not necessarily  Hamiltonian. 
We view Nijenhuis Geometry as the most natural framework for them  and expect that  the systems  and their  properties can and should  be understood in the context of Nijenhuis operators, with no other geometric structure involved.

\section{Explicit formulas for new integrable systems, their symmetries and conservation laws}\label{sect1.1}

\subsection{Four types of equations}  

A Nijenhuis operator $L$ on $\mathrm M^n$ 
is called  differentially   non-degenerate, if the differentials of the coefficients of its characteristic polynomial are linearly independent  at every point. 
Typical examples of differentially   non-degenerate  Nijenhuis  operators are as follows: 
\begin{equation}\label{dnd}
L_{\mathsf{comp}} = \left( \begin{array}{ccccc}
     u^1 & 1 & 0 & \dots & 0  \\
     u^2 & 0 & 1 & \dots & 0  \\
     \vdots & & & \ddots & \\
     u^{n - 1} & 0 & 0 & \dots & 1 \\
     u^n & 0 & 0 & \dots & 0
\end{array}\right)  \  \ \textrm{ and }  \  \  L_{\mathsf{diag}} = \left( \begin{array}{ccccc}
     x^1 & 0 & 0 & \dots   & 0  \\
      0 & x^2  & 0 &  \dots  & 0  \\
       \vdots  &    & \ddots &  & \vdots \\
      0 & \dots  & \dots & x^{n-1} & 0  \\
      0& 0 & \dots  & \dots &  x^n 
\end{array}\right).    
\end{equation}
Here $u^1,...,u^n$  and $x^1,...,x^n$ are local coordinate charts on $\mathrm M^n$. Moreover, 
in the case of $L_{\mathsf{diag}}$  we assume that $x^i$ are all different. In fact, these two operators are locally  isomorphic: if we rewrite  $L_{\mathsf{diag}}$ in the coordinates $u^1(x),...,u^n(x)$ that are coefficients of its characteristic polynomial   
(more precisely, we set $\det( t \Id- L_{\mathsf{diag}})= t^n- u^1t^{n-1}-\dots -u^n$), then it transforms into
$L_{\mathsf{comp}}$. 
\weg{Every eigenvalue of $L_{\mathsf{comp}}$ has geometric multiplicity $1$ and   its minimal polynomial coincides with the characteristic one.} Note also that every differentially non-degenerate Nijenhuis operator $L$ reduces to the companion form $L_{\mathsf{comp}}$ by an appropriate coordinate transform, and to the diagonal form  $L_{\mathsf{diag}}$ near those points where $L$ has $n$ distinct real eigenvalues, see e.g. \cite{nij}. 
 
Choose $N\ge 0$ and consider natural numbers $n_0, n_1, \dots, n_N$ and $\ell_1,...,\ell_N$ with conditions $n_0 + n_1 + \dots + n_N=n=\dim \mathrm M$ and $n_0 - \ell_1 n_1 - \dots - \ell_N n_N = d \geq 0$, and fix a polynomial 
\begin{equation}
\label{eq:mpolynomial}
m(\lambda)=m_0 + m_1 \lambda + \dots + m_d \lambda^d
\end{equation} 
of degree $\le d$. These are the parameters of our construction: $2N + 1$ natural numbers and $d + 1$ coefficients $m_0, \dots, m_d$.

Next, take the direct product  $\mathrm M^n = U_0 \times \dots  \times U_N$  of $N+1$  discs  $U_0,...,U_N$  of dimensions $n_0, n_1,...,n_N$ equipped with differentially non-degerenerate Nijenhuis operators $L_0,...,L_N$. The operator field $L$ on $\mathrm M^n$ is defined as 
\begin{equation}
\label{eq:Ldirectsum}
L=L_0 \oplus \dots \oplus L_N.
\end{equation}
Consider  the  following  family of functions $\sigma(\lambda)$ on ${  M^n}$, depending on $\lambda$ as a parameter (in general $\lambda \in \mathbb C$, so the functions might be complex-valued): 
\begin{equation} 
\label{eq:7cbis1}
\sigma(\lambda) = \frac{\operatorname{det} (L_0 - \lambda \operatorname{Id})}{\operatorname{det} (L_1 - \lambda \operatorname{Id})^{\ell_1} \dots \operatorname{det} (L_N - \lambda \operatorname{Id})^{\ell_N}}
\end{equation}
Here  in each expression $\operatorname{det}(L_i - \lambda \operatorname{Id})^{\ell_i}$, the identity matrix $\Id$ is of the same size as $L_i$, i.e.,  $n_i \times n_i$ and $\ell_i$ denotes the power. 

Next, consider the vector field $\zeta_0$ on $\mathrm M^n$ uniquely defined by the relations
$$
\mathcal L_{\zeta_0} \operatorname{det}(L_0 - \lambda \operatorname{Id}) = 1, \quad \mathcal L_{\zeta_0} \operatorname{det}(L_i - \lambda \operatorname{Id}) = 0, \quad i = 1, \dots, N,
$$
where $\mathcal L_{\zeta_0}$ denotes the Lie derivative, and define another vector field $\zeta$ by setting 
\begin{equation}\label{pzeta}
    \zeta = p(L)\zeta_0.
\end{equation}
where $p(t) = \operatorname{det} (L_1 - \lambda \operatorname{Id})^{\ell_1} \dots \operatorname{det} (L_N - \lambda \operatorname{Id})^{\ell_N} m(t)$ is a polynomial in $t$  with coefficients being functions on $\mathrm M^n$. 

Based on these settings, we finally introduce four types of equations. As unknown functions, we consider $u^1(x,t),\dots, u^n(x,t), q(x,t)$, where $(u^1,\dots, u^n)$ should be understood as coordinates on $\mathrm M^n$ and $q$ as an additional function.

\textbf{Type I. } For each real or complex number $\lambda$ consider the  equations 
\begin{equation}\label{eq:t1}
\begin{aligned}
u_t & = q_{xxx} (L - \lambda \operatorname{Id})^{-1} \zeta + q (L - \lambda \operatorname{Id})^{-1} u_x, \\
0 & = m(\lambda) q_{xx} q - \frac{1}{2} m(\lambda) (q_x)^2 + \sigma(\lambda) q^2 - 1.
\end{aligned}
\end{equation}
This is a system of the form  \eqref{eq:1v}, i.e., an $n$-component system with a differential constraint.

\textbf{Type II. }If $\lambda_i$ is a root of $m(\lambda)$, then the differential constraint in Type I becomes   degenerate and takes the form  $\sigma(\lambda_i)q^2 = 1$. Thus, we get an evolutionary PDE
\begin{equation}\label{eq:t2}
\begin{aligned}
u_t & = \Bigg( \frac{1}{\sqrt{\sigma(\lambda_i)}}\Bigg)_{xxx} (L - \lambda_i \operatorname{Id})^{-1} \zeta + \Bigg( \frac{1}{\sqrt{\sigma(\lambda_i)}}\Bigg) (L - \lambda_i \operatorname{Id})^{-1} u_x. \\
\end{aligned}
\end{equation}

\textbf{Type III. } Consider the equation
\begin{equation}\label{eq:t3}
\begin{aligned}
u_t & = q_{xxx} \zeta + (L + q \operatorname{Id}) \, u_x, \\
\frac{1}{2} \Big( \operatorname{tr} L_0 & - \sum_{j = 1}^N \ell_j \operatorname{tr} L_j\Big) =  q + (-1)^d \, \frac{m_d}{2} \,q_{xx},
\end{aligned}
\end{equation}
where $m_d$ is the highest coefficients of the polynomial $m(t)= m_0 + m_1 t + \dots + m_d t^d$. This is, again, a system of of the form \eqref{eq:1v}, i.e., an $n$-component system with a differential constraint. 

\textbf{Type IV. } Assume now  $m_d = 0$   (in this case we say that  $m(t)$ has a root at infinity, the terminology will be clarified later).  Then \eqref{eq:t3}  takes the form
\begin{equation}\label{eq:t4}
\begin{aligned}
u_t = \frac{1}{2}\Big(\operatorname{tr} L_0 - \sum_{j = 1}^N \ell_j \operatorname{tr} L_j\Big)_{xxx} \zeta + \Big( L + \frac{1}{2}\Big(\operatorname{tr} L_0 - \sum_{j = 1}^N \ell_j \operatorname{tr} L_j\Big) \operatorname{Id} \Big)u_x.
\end{aligned}
\end{equation}


\subsection{Commuting flows and conservation laws for the equations of Types I -- IV}

Here we describe an explicit procedure that generates commuting symmetries and conservation laws for the above four types of equations. 

{\bf Step 1.} In the one-component case, consider the relation
\begin{equation}\label{ric1}
    \sigma = \frac{1}{2}\mathrm u^2 + \mathrm u_x.     
\end{equation}
and its formal solution $\mathrm u = \mathcal u_1 + \mathcal u_2 + \dots$ as a differential series in $\sigma$. 

The recursion formula for the components of $\mathrm u$ from \eqref{ric1} was essentially discovered by Kruskal and Miura. In the form we need (up to notation) it appeared e.g. in \cite[eqns. 2.16--2.19]{bhd}:  
\begin{equation}
\label{eq:rec}
    \mathcal u_1  =\sqrt{ 2 \sigma}, \quad
    \mathcal u_{i + 1}  = - \frac{1}{\mathcal u_1} \Bigg(\,\frac{1}{2} \sum \limits_{j = 2}^{i} \mathcal u_j \mathcal u_{i + 2 - j} + (\mathcal u_{i})_x\Bigg), \quad i \geq 1.
\end{equation}

In the expansion $\mathrm u = \sum \mathcal u_i$, we are interested in the odd terms only and introduce two formal differential series
\begin{equation}\label{first1}
{\mathrm v}(\sigma, m) = \sqrt{2} \sum_{s = 0}^{\infty} (- 1)^s m^s \mathcal u_{2s + 1}  
\end{equation}
and 
\begin{equation}\label{second1}
\mathrm w(\sigma, m) = \sqrt{2} \sum_{s = 0}^{\infty} (- 1)^{s} m^{s} \delta \mathcal u_{2s + 1}.    
\end{equation}
Here $\delta$ stands for the variational derivative w.r.t. $\sigma$ and $m$ is considered as a formal parameter.

\noindent {\bf Important fact.}
The series ${\mathrm w}(\sigma, m)$ satisfies the following identity
\begin{equation}\label{id}
    m {\mathrm w}_{xxx}(\sigma, m) + 2 \sigma {\mathrm w}_x(\sigma, m) + \sigma_x {\mathrm w}(\sigma, m) = 0.
\end{equation}
This identity is essentially known and comes from the theory of local infinite-dimensional Poisson structures.  It can be understood as the fact that ${\mathrm v}(\sigma, m)$ is a formal Casimir of the Poisson structure defined by the operator $mD^3 + 2\sigma D + \sigma_x$ (for details and proof see e.g. \cite[Proposition 2.1]{bhd} and also \cite[Theorem 2.4]{kon} for $n$-component systems).   One can also view \eqref{id} as a way of applying the Magri-Lenard scheme to produce commuting symmetries  for  the Harry Dym equation.  However, we do not need such an interpretation and will use identity \eqref{id} as it is.

We will also use the following crucial observation by Gelfand and Dikii  \cite{gdk} (see also \cite{fordy3}).  Multiplying the l.h.s. of \eqref{id} by $\mathrm w (\sigma,m)$ and integrating in $x$ we get another important identity for  $\mathrm w$ (cf. the differential constraint from \eqref{eq:t1}):
\begin{equation}
\label{eq:smth3}
m \,\mathrm w_{xx}  \mathrm w  - \frac{1}{2} \, m \,\mathrm w_{x}^2 + \sigma\, \mathrm w^2 = 1,
\end{equation}
which, in particular, allows us to reconstruct all the terms of  \eqref{second1} step by step starting from the first term $\sqrt{2} \, \delta \mathcal u_1 = \frac{1}{\sqrt{\sigma}}$.

{\bf Step 2.}  For  $m(\lambda)$ and $\sigma (\lambda)$  defined by \eqref{eq:mpolynomial} and \eqref{eq:7cbis1} respectively,
consider the formal differential series (in any local coordinates $u^1,\dots, u^n$ on $\mathrm M^n$) with components depending on parameter $\lambda$  
\begin{equation}
\label{collection1}
\mathrm v(\lambda) = \mathrm v(\sigma(\lambda;u), m(\lambda)), \quad \mathrm w(\lambda) = \mathrm w(\sigma(\lambda;u), m(\lambda)),     
\end{equation}
obtained by replacing  $\sigma$ with  $\sigma(\lambda)=\sigma(\lambda; u^1,\dots,u^n)$ and $m$ with $m(\lambda)$ in \eqref{first1} and \eqref{second1}. Next, introduce the formal $n$-component vector field 
\begin{equation}\label{fields1}
\xi(\lambda)  = \mathrm w(\lambda)_{xxx} \, (L - \lambda\Id)^{-1} \zeta  + \mathrm w(\lambda) (L - \lambda\Id)^{-1} u_x. 
\end{equation}

This vector field is naturally related to the r.h.s. of the first equation of system  \eqref{eq:t1} of Type 1.   In fact,  \eqref{eq:t1} can be equivalently rewritten as $u_t = \xi(\lambda)$.  It follows from the fact that the differential series  $\mathrm w(\lambda)$  satisfies the same differential relation \eqref{eq:smth3} as the function $q$ (see second equation of   \eqref{eq:t1}) and can be uniquely reconstructed from it (see `Important fact' in Step 1).

{\bf Step 3.}
For each root $\lambda_i$ of the polynomial $m(\lambda)$,   expand both $m(\lambda)$ and $\sigma(\lambda,u)$  into Taylor series in powers of $\varepsilon=\lambda - \lambda_i$, i.e.,
$$
m(\lambda_i + \varepsilon) =  \sum_{s=1}^\infty  m_{s,\lambda_i} \varepsilon^s, \quad \sigma(\lambda_i + \varepsilon, u)=
\sum_{s=0}^\infty \sigma_{s, \lambda_i} (u) \varepsilon^s 
$$

Substitution  $m(\lambda)=\sum_{s=1}^\infty  m_{s,\lambda_i} \varepsilon^s$ and 
$\sigma (\lambda; u)= \sum_{s=0}^\infty \sigma_{s, \lambda_i} (u) \varepsilon^s$   ``transforms'' $\mathrm v(\lambda)=\mathrm v(\sigma(\lambda;u), m(\lambda))$ and $\mathrm w(\lambda) = \mathrm v(\sigma(\lambda,u), m(\lambda))$, as well as $\xi(\lambda)$ defined by \eqref{fields1}, into series in powers of $\varepsilon$ :
\begin{equation}
\label{eq:vslambda1}
\mathrm v\bigl(\sigma(\lambda_i + \varepsilon,u),  m(\lambda_i + \varepsilon)\bigr) = \sum_{s=0}^\infty \mathcal v_{s,\lambda_i}\varepsilon^s, \quad   \mbox{with} \ \ \mathcal v_{0,\lambda_i} = 2 \sqrt{\sigma(\lambda_i)}.
\end{equation}
\begin{equation}
\label{eq:wslambda1}
\mathrm w(\sigma(\lambda_i + \varepsilon),  m(\lambda_i + \varepsilon)) = \sum_{s=0}^\infty \mathcal w_{s,\lambda_i}\varepsilon^s, \quad  \mbox{with} \ \  \mathcal w_{0,\lambda_i} = \frac{1}{\sqrt{\sigma(\lambda_i)}}.
\end{equation}
\begin{equation}
\label{eq:xislambda1}
\xi = \sum_{s=0}^\infty \xi_{s,\lambda_i} \varepsilon^s,    
 \mbox{with} \ \ \xi_{0,\lambda_i} = \left( \tfrac{1}{\sqrt{\sigma(\lambda_i)}}\right)_{xxx} (L - \lambda\Id)^{-1}  \zeta  + 
\tfrac{1}{\sqrt{\sigma(\lambda_i)}} \, (L - \lambda\Id)^{-1} u_x.
\end{equation}

Similarly for $\lambda=\infty$, in the above construction we replace  $\sigma(\lambda)$ and $m(\lambda)$ with 
$\bar\sigma(\lambda) = (-\lambda)^d \sigma\left( \frac{1}{\lambda}\right)$,
$\bar m(\lambda) = (-\lambda)^d m\left( \frac{1}{\lambda}\right)$ and also $L -\lambda\Id$ with $\Id -\lambda L$. 
In particular, we set
$$
\begin{aligned}
\bar {\mathrm v}(\lambda) &= \mathrm v (\bar\sigma(\lambda), \bar m(\lambda)), \quad \bar {\mathrm w}(\lambda) = \mathrm w (\bar\sigma(\lambda), \bar m(\lambda)), \\
\bar \xi(\lambda)  & = \bar{\mathrm w}(\lambda)_{xxx} \, (\Id - \lambda L)^{-1} \zeta  + \bar{\mathrm w}(\lambda) (\Id - \lambda L)^{-1} u_x. 
\end{aligned}
$$

 Then if $\bar m(0) = 0$, we substitute  $\bar\sigma(\varepsilon)$, $\bar m(\varepsilon)$ into these relations  and expand in powers of $\varepsilon$ to get\footnote{In these power series, we shift indices of  coefficients by 1.  The reason is that the first terms of these expansions are trivial and can be ignored.  This shift also allows us to keep notation consistent with the case of $\lambda_i\ne \infty$.} 
 
\begin{equation}
\label{eq:vsinfty}
\bar{\mathrm v}(\varepsilon)= \sum_{s=0}^\infty \mathcal v_{s - 1,\infty}\varepsilon^s, \quad   \mbox{with} \ \ \mathcal v_{- 1,\infty} = 2 \sqrt{\bar\sigma(0)} = 2  \mbox{ and }   \mathcal v_{0,\infty} = - 2 f(u);
\end{equation}
\begin{equation}
\label{eq:wsinfty}
\bar{\mathrm w}(\varepsilon) = \sum_{s=0}^\infty \mathcal w_{s - 1,\infty}\varepsilon^s, \quad  \mbox{with} \ \  \mathcal w_{-1,\infty} = 1 \mbox{ and } \mathcal w_{0,\infty} = \frac{1}{2} f(u);
\end{equation}
\begin{equation}
\label{eq:xislambda11}
\bar \xi(\varepsilon) = \sum_{s=0}^\infty \xi_{s-1,\infty} \varepsilon^s,    \quad \mbox{with $\xi_{-1,\infty} = u_x$ and $\xi_{0,\infty}= \frac{1}{2}f_{xxx} \zeta + (L +\frac{1}{2} f\Id) u_x$}
\end{equation}
where $f$ is defined from  $\bar\sigma(\varepsilon) = 1 - \varepsilon f + \dots$, i.e., $ f = \frac{d}{d\varepsilon}|_{\varepsilon=0}  \bar\sigma(\varepsilon)=  \operatorname{tr} L_0 - \sum_{j = 1}^N \ell_j \operatorname{tr} L_j$. 

The coefficients of the $\varepsilon$-expansions \eqref{eq:vslambda1}, \eqref{eq:xislambda1}, \eqref{eq:vsinfty} and \eqref{eq:xislambda11}  define  hierarchies of common commuting symmetries and conservation laws for the above introduced equations of types I\,--\,IV. Namely, we have 

\begin{Theorem}\label{thm01}
Let $\lambda_1,\dots, \lambda_r$ be the roots of the polynomial $m(\lambda)$ (including $\infty$ when appropriate). Then the evolutionary PDE systems
$$
u_t = \xi_{s, \lambda_i}, \quad i = 1, \dots, r, \ \ s = 0, 1, \dots.
$$
are commuting symmetries and the differential polynomials
$$
\mathcal v_{s,\lambda_i}, \quad i = 1, \dots, r, \ \ s = 0, 1, \dots
$$
are conservation law densities for equations \eqref{eq:t1}--\eqref{eq:t4} of Types  I\,--\,IV.  Moreover,  the equations {\rm(\ref{eq:t2}, Type II)} and {\rm (\ref{eq:t4}, Type IV)} take the form  $u_t=\xi_{0,\lambda_i}$ for $\lambda_i \neq \infty$ and $\lambda_i = \infty$ respectively.
\end{Theorem}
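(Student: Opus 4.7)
The plan is to obtain Theorem \ref{thm01} as a direct consequence of the general machinery in Section \ref{sect2}, specifically Theorems \ref{t1}, \ref{t2} and \ref{t3}. I would need to verify three things: (i) the $\lambda$-parametric formal vector field $\xi(\lambda)$ from \eqref{fields1} is a commuting symmetry and $\mathrm v(\lambda)$ is a conservation law density for every equation of Types I--IV, when these are viewed as formal evolutionary systems; (ii) the Taylor expansions at each root $\lambda_i$ of $m(\lambda)$ --- and analogously at $\lambda=\infty$ when $m_d=0$ --- consist of genuine differential polynomials rather than formal series; (iii) the zeroth-order coefficients recover the Type II and Type IV equations.

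For (i), the key observation is that $\mathrm w(\lambda)$ satisfies exactly the algebraic-differential identity \eqref{eq:smth3} by the ``Important fact'' of Step 1, so substituting the formal series $q=\mathrm w(\lambda)$ into \eqref{eq:t1} converts it into $u_t=\xi(\lambda)$. Commutativity of these flows for different values of the spectral parameter, together with the conservation-law generating property of $\mathrm v(\lambda)$, is precisely what Theorem \ref{t1} delivers from a solution of an auxiliary system on the Nijenhuis manifold $(\mathrm M^n,L)$, while Theorem \ref{t3} supplies a closed-form solution of that system under the differential non-degeneracy hypothesis satisfied by \eqref{eq:Ldirectsum}. The analogous statement for the Type III and IV equations follows by the parallel substitution using $\bar{\mathrm w}$ and the expansion at infinity.

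For (ii) and (iii), the degeneracy at $\lambda=\lambda_i$ plays the decisive role: since $m(\lambda_i)=0$, the constraint \eqref{eq:smth3} collapses to $\sigma(\lambda_i)\mathrm w^2=1$, forcing $\mathrm w(\lambda_i)=1/\sqrt{\sigma(\lambda_i)}$, a plain function on $\mathrm M^n$. Plugging this into \eqref{fields1} immediately reproduces \eqref{eq:t2}, which identifies $\xi_{0,\lambda_i}$. For $s\ge 1$, I would expand \eqref{eq:smth3} order by order in $\varepsilon=\lambda-\lambda_i$ and argue inductively that each coefficient $\mathcal w_{s,\lambda_i}$ is expressible as a differential polynomial in the lower-order ones; this is the non-formal reduction carried out in Theorem \ref{t2}. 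The case $\lambda=\infty$ with $\bar m(0)=0$ is handled identically after the change of variable $\lambda\mapsto 1/\lambda$; the prescribed shift of indices by one simply absorbs the trivial leading term, and the expansion $(\Id-\lambda L)^{-1}=\Id+\lambda L+\cdots$ together with $\bar{\mathrm w}(\varepsilon)=1+\tfrac{1}{2} f\varepsilon+\cdots$ yields $\xi_{0,\infty}=\tfrac{1}{2} f_{xxx}\zeta+\bigl(L+\tfrac{1}{2} f\,\Id\bigr)u_x$, matching \eqref{eq:t4}. The main obstacle is step (ii): showing that the formal Taylor reduction at a root of $m$ actually produces differential polynomials and not genuine infinite series. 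This polynomial reduction, which is what earns the $\xi_{s,\lambda_i}$ the label ``non-formal'', is the substantive content of Theorem \ref{t2} and ultimately rests on the fact that at a root of $m$ the leading-order constraint is purely algebraic, so the Kruskal--Miura recursion \eqref{eq:rec} terminates at each Taylor order after finitely many operations.
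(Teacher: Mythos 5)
Your proposal follows essentially the same route as the paper: Theorem \ref{thm01} is deduced by specialising Theorems \ref{t1} and \ref{t2} to the concrete data of Section \ref{sect1.1}, with the polynomiality of the coefficients at a root of $m(\lambda)$ and the identification of the $s=0$ terms with the Type II/IV equations being exactly the content of Theorem \ref{t2}. The one small correction: Theorem \ref{t3} is a uniqueness (classification) statement and does not by itself certify that \eqref{eq:7cbis1} and \eqref{pzeta} solve \eqref{eq:05} and \eqref{zeta2} --- the paper instead relies on a direct (and indeed straightforward) verification of those two identities for the given $\sigma$ and $\zeta$, which is the step you would need to carry out in place of citing Theorem \ref{t3}.
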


Thus, for a Nijenhuis operator $L$ (decomposed into differentially non-degenerate blocks),  Theorem 1 gives a series of multi-component integrable systems and provide, for each of them, commuting symmetries and conservation laws that can be constructed by an explicit iterative procedure.

\section{General construction}\label{sect2}


\subsection{Parameters of the general construction}

Let $L$ be a Nijenhuis operator on $\mathrm M^n$ and $f:\mathrm M^n \to \R$ be a function such that the 1-form $L^* \ddd f$ is closed so that locally  $L^* \ddd f = \ddd f_1$ for some function $f_1$. Then (see Section 3 in \cite{magri1}) there exists an infinite sequence of functions $f_k$, $k = 1, \dots$,  such that $\ddd f_k = (L^*)^k \ddd f$. We refer to $f$ as a {\it conservation law} for the Nijenhuis operator $L$. The name comes from the fact that  $f$ provides a conservation law in the sense of \eqref{conserve} for the quasilinear system $u_t = L u_x$.  The above observation means that  $f$ is a conservation laws for every power of $L$ or, equivalently, generates a hierarchy of conservation laws for $L$.

In particular, this implies that $(L^* - \lambda \operatorname{Id})^{-1} \ddd f$ is also closed for any $\lambda$. Thus, there locally exists a function $g(\lambda; u)$ depending on $\lambda$ as a parameter and such that $\ddd g = (L^* - \lambda \operatorname{Id})^{-1} \ddd f$. Here $u=(u^1,\dots, u^n)$ are coordinates on $\mathrm M^n$ and $\ddd$ denotes the differential of a function w.r.t. $u$. One of the main ingredients of our construction is the function  $\sigma (\lambda; u) = e^{g(\lambda; u)}$ satisfying the identity
\begin{equation}
\label{eq:05}
\big(L^* - \lambda \operatorname{Id}\big) \ddd \sigma = \sigma \, \ddd f, \qquad \lambda \in \mathbb C.
\end{equation}

If  $L(u) -\lambda\Id$ is invertible, i.e. $\lambda\notin \operatorname{Spectrum} L(u)$, then $\sigma (\lambda; u)$ is analytic in $\lambda$,  otherwise the point $(\lambda, u)$ may be singular  (like pole or zero, or branching point).  

Next, assume that there exist  a vector field $\zeta$ on $M^n$  and constant  $C \in \R$ such that\footnote{This condition is quite non-trivial so that the existence of a non-zero $\zeta$ depends on $L$ and $f$.  However, for $\zeta = 0$ the construction still makes sense but reduces to a hydrodynamic type system, see Corollary \ref{cor1}.}
\begin{equation}\label{zeta2}
\mathcal L_\zeta \bigl(\sigma(\lambda; u)\bigr) + C \sigma(\lambda; u) = m(\lambda),
\end{equation}  
that is, the left hand side does not depend on $u$ and hence is a certain function of $\lambda$  (in the examples discussed below, $m(\lambda)$ is always a polynomial of degree $\le n=\dim M$). This triple, i.e., Nijenhuis operator $L$, conservation law $f$ and vector field $\zeta$ are parameters of the construction. 

Notice that $\sigma(\lambda;u)$ and $m(\lambda)$ satisfying  \eqref{eq:05}, \eqref{zeta2} are defined up to simultaneous multiplication by an arbitrary function $c(\lambda)$.  This kind of scaling is not important for the construction below and we will treat it as a trivial transformation.


\subsection{Main theorems}\label{sect:3.2}

Fix a triple $L$, $f$ and $\zeta$. Construct $\sigma(\lambda; u)$ and $m(\lambda)$ by formulas \eqref{eq:05} and \eqref{zeta2}. Using them construct infinite differential series $\mathrm v(\lambda)$ and $\mathrm w(\lambda)$ by  \eqref{first1} and \eqref{second1}. 

Recall that the series $\mathrm w(\lambda)$ satisfies the Gelfand-Dikii identity \eqref{eq:smth3}:
\begin{equation}\label{r3}
m(\lambda) \left(\mathrm w_{xx} (\lambda) \mathrm w (\lambda) - \frac{1}{2} (\mathrm w_{x} (\lambda))^2\right) + \sigma(\lambda) (\mathrm w(\lambda))^2 = 1.
\end{equation}
Based on this information,  we introduce an $n$-component system with a differential constraint\footnote{The equation of Type I from Introduction is exactly of this kind with $q=\mathrm{w}(\lambda)$ for a specific choice of parameters $L$, $f$ and $\zeta$.}
\begin{equation}\label{eq:main}
\begin{aligned}
    u_{t_\lambda} & = \mathrm w_{xxx}(\lambda) \big(L - \lambda \operatorname{Id}\big)^{-1} \zeta + \mathrm w (\lambda) \big(L - \lambda \operatorname{Id}\big)^{-1} u_x, \\
    0 & = m(\lambda)\left( \mathrm w_{xx} (\lambda) \mathrm w (\lambda) - \frac{1}{2} (\mathrm w_{x} (\lambda))^2 \right)+ \sigma(\lambda) (\mathrm w(\lambda))^2 - 1.
\end{aligned}
\end{equation}

\weg{Formula \eqref{eq:main} defines a continuous family of equations of type \eqref{eq:1v} (compare with Type I equation \eqref{eq:t1}).} 

In this construction one can naturally make sense of $\lambda = \infty$.  Namely, we define
$\bar\sigma (\lambda) = c(\lambda) \sigma (\frac{1}{\lambda})$ and $\bar m (\lambda) = c(\lambda) m (\frac{1}{\lambda})$, where $c(\lambda)$ is a suitable scaling factor. The function $\bar \sigma(\lambda)$ does not satisfy \eqref{eq:05}, but does satisfy a very similar relation
\begin{equation}\label{eq:05i}
    (\Id -\lambda L) \ddd \bar\sigma(\lambda) = - \lambda \bar \sigma (\lambda) \ddd f.
\end{equation}
This implies that  $c(\lambda)$ can be chosen in such a way that $\bar\sigma(\lambda)$ is analytic in $\lambda$ in a neighbourhood of zero and moreover, $\bar\sigma (\lambda) =  1 - \lambda f +  \dots$,   where dots denote higher order terms in $\lambda$. We will assume that $c(\lambda)$ is chosen in this way. Then we set $\bar{\mathrm v}(\lambda) = \mathrm v(\bar \sigma(\lambda; u), \bar m(\lambda))$ and $\bar{\mathrm w}(\lambda) = \mathrm w(\bar\sigma(\lambda; u), \bar m(\lambda))$ and rewrite the family of equations \eqref{eq:main} in the following equivalent form obtained by replacing $\lambda$ with  $\frac{1}{\lambda}$:
\begin{equation}\label{eq:formal2}
\begin{aligned}
    u_{\bar t_\lambda} & = \bar{\mathrm w}_{xxx}(\lambda) \big(\operatorname{Id} - \lambda L\big)^{-1} \zeta + \bar{\mathrm w}(\lambda) \big(\Id - \lambda L \big)^{-1} u_x, \\
    0 & = \bar m(\lambda) \left(\bar{\mathrm w}_{xx} (\lambda) \bar{\mathrm w} (\lambda) - \frac{1}{2} (\bar{\mathrm w}_{x} (\lambda))^2 \right) + \bar \sigma(\lambda) (\bar{\mathrm w}(\lambda))^2 - 1.
\end{aligned}
\end{equation}
More precisely, we have $\bar {\mathrm w}(\lambda) = \frac{1}{\sqrt{c(\lambda)}} \mathrm w(\frac{1}{\lambda})$ so that $u_{\bar t_\lambda}$ coincides with $u_{t_{\lambda^{-1}}}$ up to appropriate rescaling  (with a factor depending on $\lambda$).

This transformation allows us to set $\lambda = 0$ in  \eqref{eq:formal2}  which will naturally correspond to $\lambda = \infty$ in \eqref{eq:main}.  In particular, we set
\begin{equation}
\label{eq:vinfty1}
\mathrm v(\infty) = \bar {\mathrm v}(0)\quad\mbox{and}\quad    \mathrm w(\infty) = \bar {\mathrm w}(0).  
\end{equation}

However,  $\bar\sigma(0)=1$ leading to $\bar {\mathrm w}(0) = 1$ and hence to the trivial evolutionary equation $u_{\bar t_\lambda} = u_x$.  To get a non-trivial equation `at infinity',  we may consider the derivative of \eqref{eq:formal2}  at $\lambda = 0$, namely we set
$$
\begin{aligned}
u_{t_\infty} &=  \lim_{\lambda\to 0}  \frac{1}{\lambda} (u_{\bar t_\lambda} - u_{\bar t_0})  = \\
&=\frac{1}{\lambda}\left(\bar{\mathrm w}_{xxx}(\lambda) \big(\operatorname{Id} - \lambda L\big)^{-1} \zeta + \bar{\mathrm w}(\lambda) \big(\Id - \lambda L \big)^{-1} u_x - u_x\right) = \\
&= q_{xxx} \zeta  + (L + q\Id) u_x,
\end{aligned}
$$
where $\bar{\mathrm w}(\lambda) = 1 + \lambda q + \mbox{\small{(higher order terms in $\lambda$)}}$.  In other words, $q$ is the derivative of $\bar{\mathrm w}(\lambda)$ w.r.t. $\lambda$ at zero.  Substituting $\bar{\mathrm w}(\lambda) = 1 + \lambda q + \dots$ into the second equation of \eqref{eq:formal2} we obtain the following constraint for $q$:
$$
\bar m(0) q_{xx} + 2q - f(u) = 0.
$$
To summarise,  for $\lambda = \infty$ we consider the following evolutionary PDE system with a constraint:
\begin{equation}
\label{eq:formal3}
\begin{aligned}
u_{t_\infty} = q_{xxx} \zeta  + (L + q\Id) u_x, \\
0=\bar m(0) q_{xx} + 2q - f(u).
\end{aligned}
\end{equation}

Equations \eqref{eq:main} and \eqref{eq:formal3} (related to $\lambda \neq \infty$ and $\lambda=\infty$ respectively) are now understood as a parametric family with $\lambda \in \bar{\mathbb C}=\mathbb C \cup \{\infty\}$.  The main property of this family of PDEs is as follows.

\begin{Theorem}\label{t1}
Let $L$ be a Nijenhuis operator and $f$ a conservation law of $L$.  Consider $\sigma(\lambda; u)$ constructed from \eqref{eq:05}, and a vector field  $\zeta$ satisfying \eqref{zeta2} for a certain function $m(\lambda)$. Then for any $\lambda,\mu \in \bar{\mathbb C}=\mathbb C \cup \{\infty\}$,     the differential series $\mathrm{v}(\mu)$ defined by \eqref{first1}, \eqref{eq:vinfty1} is a conservation law density for the evolutionary flow $u_{t_\lambda}$ with a differential constraint defined by \eqref{eq:main}, \eqref{eq:formal3}.
Moreover, if $f$ is generic in the sense that $\ddd f, L^*\ddd f, \dots , (L^{n-1})^*\ddd f$ are linearly independent,  then the flows 
$u_{t_\lambda}$'s pairwise commute. 
\end{Theorem}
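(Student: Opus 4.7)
The proof hinges on the fact that $\mathrm v(\mu)$ depends on $u$ only through the scalar $\sigma(\mu;u)$, and that by construction \eqref{first1}--\eqref{second1} the variational derivative $\delta\mathrm{v}(\mu)/\delta\sigma(\mu)$ equals $\mathrm{w}(\mu)$. Consequently, both parts reduce to analysing the evolution of $\sigma(\mu;u)$ under the flow \eqref{eq:main}, controlled by the three structural ingredients already in place: \eqref{eq:05}, \eqref{zeta2}, and the Gelfand--Dikii identity \eqref{r3} (together with its corollary \eqref{id}).

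The central step is to compute $\partial_{t_\lambda}\sigma(\mu)=\langle\ddd\sigma(\mu),u_{t_\lambda}\rangle$ in closed form. Using $\ddd\sigma(\mu)=\sigma(\mu)(L^*-\mu\Id)^{-1}\ddd f$ from \eqref{eq:05} and the partial-fractions identity $(L^*-\lambda)^{-1}(L^*-\mu)^{-1}=\frac{1}{\lambda-\mu}\bigl((L^*-\lambda)^{-1}-(L^*-\mu)^{-1}\bigr)$, one obtains
$$
\langle\ddd\sigma(\mu),(L-\lambda\Id)^{-1}u_x\rangle=\frac{\sigma(\mu)}{\lambda-\mu}\Bigl(\frac{\sigma_x(\lambda)}{\sigma(\lambda)}-\frac{\sigma_x(\mu)}{\sigma(\mu)}\Bigr),
$$
and, after the $C$-term from \eqref{zeta2} cancels,
$$
\langle\ddd\sigma(\mu),(L-\lambda\Id)^{-1}\zeta\rangle=\frac{\sigma(\mu)m(\lambda)-\sigma(\lambda)m(\mu)}{\sigma(\lambda)(\lambda-\mu)}.
$$
The case $\lambda=\infty$ is handled in parallel by working with $\bar\sigma,\bar m$ and the companion identity \eqref{eq:05i} after the $c(\lambda)$-rescaling.

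For the conservation-law half, chain rule plus summation by parts gives $\partial_{t_\lambda}\mathrm{v}(\mu)\equiv \mathrm{w}(\mu)\,\partial_{t_\lambda}\sigma(\mu)\pmod{D(\cdot)}$. Substituting the two formulas above and using \eqref{id} to replace $m(\lambda)\mathrm{w}_{xxx}(\lambda)$ by $-2\sigma(\lambda)\mathrm{w}_x(\lambda)-\sigma_x(\lambda)\mathrm{w}(\lambda)$ (and analogously for $\mu$), then integrating by parts against the factor $\mathrm{w}(\mu)$, I expect the result to collapse to a total $x$-derivative, which establishes $\partial_{t_\lambda}\mathrm{v}(\mu)\equiv 0\pmod{D}$.

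For commutativity, the genericity hypothesis on $f$ implies via Cayley--Hamilton that $\ddd f,L^*\ddd f,\dots,(L^{n-1})^*\ddd f$ span $T^*_u\mathrm M$, and hence so do the forms $\ddd\sigma(\lambda)=\sigma(\lambda)(L^*-\lambda\Id)^{-1}\ddd f$ as $\lambda$ varies. Therefore $[\partial_{t_\lambda},\partial_{t_\mu}]=0$ is equivalent to the scalar identity $[\partial_{t_\lambda},\partial_{t_\mu}]\sigma(\nu)=0$ for every third parameter $\nu$, which I would attack by two nested applications of the master formula for $\partial_{t_\alpha}\sigma(\beta)$ derived above, reducing every $x$-derivative of $\mathrm{w}(\lambda)$ or $\mathrm{w}(\mu)$ that arises through \eqref{id}. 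The main obstacle is precisely this nested computation: the iterated brackets produce a tangle of $x$-derivatives of $\sigma$ and $\mathrm{w}$ at three parameters $\lambda,\mu,\nu$, and the required $\lambda\leftrightarrow\mu$ antisymmetry needs to be organised transparently from repeated use of \eqref{id} rather than from a long cascade of fortuitous cancellations. A secondary technical point is the uniform treatment of $\lambda=\infty$ or $\mu=\infty$, which requires checking that the $c(\lambda)$-rescaling built into \eqref{eq:05i} preserves every identity invoked in the argument.
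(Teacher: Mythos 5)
Your architecture coincides with the paper's: the resolvent identity applied to \eqref{eq:05} and \eqref{zeta2} gives exactly the paper's Lemma \ref{lm2} and formula \eqref{r2} (your two displayed formulas are correct, including the cancellation of the $C$-term), the conservation-law half then follows from the chain rule $\vpd{\mathrm v(\mu)}{u^\alpha}=\mathrm w(\mu)\pd{\sigma(\mu)}{u^\alpha}$, summation by parts against $\mathrm w(\mu)$, and identity \eqref{id} evaluated at the parameter $\mu$; and the reduction of commutativity to the scalar statement $[\mathcal L_{\xi(\lambda)},\mathcal L_{\xi(\mu)}]\sigma(\nu)=0$ via genericity of $f$ is also the paper's closing step. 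That half of your plan would go through as written.

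The gap sits precisely at the ``main obstacle'' you name but do not resolve. Iterating the master formula means applying $\mathcal L_{\xi(\nu)}$ to the right-hand side of \eqref{r2}, which contains $\mathrm w(\lambda)$, $\mathrm w_x(\lambda)$ and $\mathrm w_{xxx}(\lambda)$. Identity \eqref{id} only lets you trade $m(\lambda)\mathrm w_{xxx}(\lambda)$ for lower-order terms in $\mathrm w(\lambda)$; it gives you no way to evaluate $\mathcal L_{\xi(\nu)}\mathrm w(\lambda)$ itself, so ``reducing every $x$-derivative of $\mathrm w$ through \eqref{id}'' cannot close the nested computation. The missing ingredient is a closed-form evolution law for $\mathrm w$, namely
\begin{equation*}
\mathcal L_{\xi(\lambda)}\mathrm w(\mu)=\frac{1}{\mu-\lambda}\bigl(\mathrm w_x(\mu)\,\mathrm w(\lambda)-\mathrm w(\mu)\,\mathrm w_x(\lambda)\bigr),
\end{equation*}
which the paper isolates as Lemma \ref{fom}. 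It is obtained by applying $\mathcal L_{\xi(\lambda)}$ to the Gelfand--Dikii constraint \eqref{r3}, which yields $\mathcal R\bigl(\mathcal L_{\xi(\lambda)}\mathrm w(\mu)\bigr)=-\mathrm w^3(\mu)\,\mathcal L_{\xi(\lambda)}\sigma(\mu)$ for a formally invertible second-order differential operator $\mathcal R$, and then verifying the displayed candidate by substitution using \eqref{r2} and \eqref{id}. Only with this lemma does the second application of $\mathcal L_{\xi(\nu)}$ become a finite expression whose symmetry under exchanging the two flow parameters follows from the three-term partial-fraction identity. Your plan, as stated, has no mechanism for producing this evolution law, so the commutativity claim is not yet established; the rest (including the uniform treatment of $\lambda=\infty$ via $\bar\sigma$, $\bar m$ and \eqref{eq:05i}) matches the paper.
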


As a straightforward corollary of this construction, we may consider the `trivial' case when $\zeta = 0$ and $m(\lambda)=0$.  In this situation,  the first term in \eqref{eq:main} disappears, but  our construction still gives a non-trivial series of integrable quasilinear systems.

\begin{Corollary}\label{cor1}
Let $L(u)$ be a Nijenhuis operator, $f(u)$ a conservation law of $L$ and $\sigma(\lambda; u)$ denote the function satisfying 
\eqref{eq:05}.  Then the evolutionary flows 
$$
u_{t_\lambda} = \sigma (\lambda; u) \bigl(L(u)- \lambda \Id\bigr)^{-1} u_x
$$
pairwise commute for all $\lambda$'s.  Moreover, the functions $\dfrac{1}{\sigma(\mu,x)}$ are common conservation law densities for these flows (for all $\lambda$ and $\mu$). 
\end{Corollary}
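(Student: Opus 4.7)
The plan is to obtain Corollary \ref{cor1} as a specialization of Theorem \ref{t1}, taking $\zeta = 0$ and $m(\lambda)\equiv 0$, followed by a rescaling of the conservation law $f$ to bring the result into the stated form.

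First, the pair $\zeta = 0,\ m(\lambda)\equiv 0$ satisfies \eqref{zeta2} trivially (with $C=0$). The differential constraint in \eqref{eq:main} then degenerates to $\sigma(\lambda)\mathrm w(\lambda)^2 = 1$, forcing $\mathrm w(\lambda) = 1/\sqrt{\sigma(\lambda)}$. This is consistent with \eqref{second1}: at $m=0$ only the $s=0$ term survives, giving $\mathrm w = \sqrt{2}\,\delta \mathcal u_1 = 1/\sqrt{\sigma}$, and likewise $\mathrm v = 2\sqrt{\sigma}$ from \eqref{first1}. Consequently \eqref{eq:main} reduces to
\begin{equation*}
u_{t_\lambda} \;=\; \tfrac{1}{\sqrt{\sigma(\lambda;u)}}\,(L - \lambda\Id)^{-1} u_x,
\end{equation*}
and Theorem \ref{t1} yields pairwise commutativity together with $\mathrm v(\mu) = 2\sqrt{\sigma(\mu;u)}$ as a common conservation law density.

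Second, I exploit the homogeneity of \eqref{eq:05}: direct differentiation gives $(L^* - \lambda\Id)\,\ddd(\sigma^\alpha) = \alpha\sigma^\alpha\,\ddd f$ for any $\alpha\in\R$, so the rescaling $f\mapsto -2f$ replaces $\sigma$ by $\sigma^{-2}$ while preserving \eqref{eq:05}. Applied to the previous step, this rewrites the flow as $u_{t_\lambda} = \sigma(\lambda;u)(L-\lambda\Id)^{-1} u_x$ and the density as $2/\sigma(\mu;u)$, delivering the Corollary up to the irrelevant scalar factor in the density.

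The main obstacle is the genericity hypothesis $\ddd f,\, L^*\ddd f,\,\dots,\,(L^{n-1})^*\ddd f$ linearly independent required by Theorem \ref{t1} for commutation, which is not imposed in the Corollary. I would remove it by a continuity argument: both \eqref{bracket} and \eqref{conserve} are local algebraic identities in the jets of $L$, $f$ and $u$, so their validity on the dense open set where $f$ is generic forces them to hold on all of $\mathrm M^n$. As an independent cross-check of the conservation law part one can compute directly: using \eqref{eq:05} and the resolvent identity $(L^*-\lambda\Id)^{-1}(L^*-\mu\Id)^{-1} = (\mu-\lambda)^{-1}\bigl[(L^*-\mu\Id)^{-1} - (L^*-\lambda\Id)^{-1}\bigr]$, one finds
\begin{equation*}
\partial_{t_\lambda}\bigl(1/\sigma(\mu;u)\bigr) \;=\; D\!\left(\frac{1}{\mu-\lambda}\cdot\frac{\sigma(\lambda;u)}{\sigma(\mu;u)}\right),
\end{equation*}
which confirms the conservation law claim with no genericity required.
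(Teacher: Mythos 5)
Your proposal matches the paper's own route: the Corollary is obtained from Theorem \ref{t1} by setting $\zeta=0$ and $m(\lambda)\equiv 0$ (so that the constraint degenerates to $\sigma(\lambda)\mathrm w(\lambda)^2=1$ and $\mathrm w(\lambda)=\sigma(\lambda)^{-1/2}$) and then, exactly as in the Remark following the Corollary, rescaling $f\mapsto c\, f$, which replaces $\sigma$ by $\sigma^{c}$, with $c=-2$; your direct check of the conservation-law identity via the resolvent identity is the "direct proof" the paper alludes to and is correct (it agrees with Lemma \ref{lm2}). The only soft spot is your continuity argument for dropping the genericity hypothesis: you assume the locus where $\ddd f, L^*\ddd f,\dots,(L^{n-1})^*\ddd f$ are independent is dense in $\mathrm M^n$, which can fail for a particular $f$, so one should instead perturb $f$ within the space of conservation laws of $L$ and pass to the limit in the polynomial commutator identity — but this is a point the paper itself glosses over.
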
 

\begin{Remark} {\rm  
In the assumptions of Theorem \ref{t1},  we obtain the flows of the form  $\mathrm w(\lambda)(L-\lambda\Id)^{-1} u_x$ with $\mathrm w (\lambda)= \sigma(\lambda)^{-1/2}$ for $m=0$. This exponent $-\frac{1}{2}$, however, is not very essential. 
Indeed, if $\sigma(\lambda)$ satisfies \eqref{eq:05}, then $\sigma(\lambda)^c$ satisfies \eqref{eq:05} also with $f$ replaced with $\tilde f = c\cdot f$, so that Corollary \ref{cor1}  can be easily obtained by an appropriate rescaling.  Of course, this corollary admits a direct proof without using Theorem \ref{t1}.  
}\end{Remark} 

We also note that Corollary \ref{cor1} can be understood as a $\lambda$-version of the construction by 
F. Magri suggested in \cite{magri1} and then developed in \cite{lm}. If $f=\tr L$ and $\lambda \to \infty$, then we obtain the system studied by E. Ferapontov and M. Pavlov in \cite{pf}  (see also \cite{appnij2})  and for $f = c\cdot \tr L$, $c\in\R$   we obtain the so-called $\varepsilon$-systems studied by M. Pavlov \cite{Pavlov}.

We will need another corollary from Theorem \ref{t1}. Consider a formal PDE  
\begin{equation}
\label{eq:xiformal}
u_{t_\lambda}=\xi(\lambda)
\end{equation}
obtained from \eqref{eq:main}  by resolving the constraint w.r.t. $\mathrm w(\lambda)$, i.e., expressing $\mathrm w(\lambda)$  as a formal differential series and substituting it into the first equation of \eqref{eq:main}.  As a result,  the r.h.s. of \eqref{eq:xiformal}  becomes  a formal differential series in the derivatives  $u_x, u_{xx}, \dots$ whose coefficients are functions in $\lambda$ and $u$.  

We now fix $\lambda$ and, in a small neighbourhood of it,  expand $\xi (\lambda + \varepsilon)$ in powers of $\varepsilon$: 
\begin{equation}
\label{eq:expandxi}
\xi (\lambda + \varepsilon) = \sum_{s=0}^\infty \xi_{s,\lambda} \varepsilon^s.
\end{equation}
In the same way we defines $\varepsilon$-expansions for $\mathrm v(\lambda)$ and $\mathrm w(\lambda)$:
\begin{equation}
\label{eq:expandvw}
\mathrm v(\lambda + \varepsilon) = \sum_{s=0}^\infty \mathrm v_{s, \lambda} \varepsilon^s, \quad 
\mathrm w(\lambda + \varepsilon) = \sum_{s=0}^\infty \mathrm w_{s, \lambda} \varepsilon^s.
\end{equation}
Notice that by construction, each coefficient $\xi_{s,\lambda}$, $\mathrm v_{s,\lambda}$ or $\mathrm w_{s,\lambda}$ is still a formal differential series in $u_x, u_{xx}, \dots$.

\begin{Corollary}\label{cor0}
In the settings of Theorem \ref{t1}, assume that the conservation law $f$ is generic in the sense that $\ddd f, L^*\ddd f, \dots , (L^{n-1})^*\ddd f$ are linearly independent.  Then the (formal) evolutionary flows defined by the (formal) vector fields 
$\xi_{s, \lambda}$ ($\lambda\in \bar{\mathbb C}$, $s=0,1,2,\dots$) pairwise commute. Moreover, $\mathcal v_{r, \mu}$
are common (formal) conservation law densities for all of them ($\mu\in \bar{\mathbb C}$, $r=0,1,2,\dots$).
\end{Corollary}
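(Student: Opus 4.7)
The plan is to deduce the corollary from Theorem \ref{t1} by a purely formal Taylor-expansion argument in the spectral parameter. Theorem \ref{t1} asserts that for every pair $\lambda,\mu\in\bar{\mathbb C}$ the flows $u_{t_\lambda}$ and $u_{t_\mu}$ commute, and that $\mathrm v(\mu)$ is a conservation law density for $u_{t_\lambda}$. After resolving the constraint for $\mathrm w(\lambda)$ as a formal differential series, these statements translate into the identities
\begin{equation*}
[\xi(\lambda),\xi(\mu)] \;=\; 0 \qquad\text{and}\qquad \mathcal L_{\xi(\lambda)}\mathrm v(\mu) \;=\; D\,\mathrm h(\lambda,\mu)
\end{equation*}
for some formal differential series $\mathrm h(\lambda,\mu)$, where the bracket is the one defined in \eqref{bracket} and $\mathcal L_{\xi(\lambda)}$ is the action appearing in \eqref{conserve}. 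Both identities hold as equalities of formal differential series whose coefficients depend analytically on $\lambda,\mu$ in a neighbourhood of the chosen points.

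Next, I would fix $\lambda_0,\mu_0\in\bar{\mathbb C}$ and substitute $\lambda=\lambda_0+\varepsilon_1$, $\mu=\mu_0+\varepsilon_2$, expanding everything as in \eqref{eq:expandxi}, \eqref{eq:expandvw}. The bracket \eqref{bracket} is \emph{bilinear} in its two arguments and only involves partial derivatives w.r.t.\ the jet variables $u^\alpha_{x^j}$ together with the total derivative $D$, none of which interact with the parameters $\varepsilon_1,\varepsilon_2$. Consequently the commutator expands as a double power series
\begin{equation*}
[\xi(\lambda_0+\varepsilon_1),\xi(\mu_0+\varepsilon_2)] \;=\; \sum_{s,r\ge 0}[\xi_{s,\lambda_0},\xi_{r,\mu_0}]\,\varepsilon_1^{s}\varepsilon_2^{r},
\end{equation*}
and the vanishing of the left-hand side forces $[\xi_{s,\lambda_0},\xi_{r,\mu_0}]=0$ for all $s,r\ge 0$. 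The same manipulation applied to the conservation-law identity yields $\mathcal L_{\xi_{s,\lambda_0}}\mathcal v_{r,\mu_0} = D(\mathrm h_{s,r,\lambda_0,\mu_0})$ for the corresponding coefficients of $\mathrm h(\lambda,\mu)$, which is precisely the statement that $\mathcal v_{r,\mu_0}$ is a conservation law density for $\xi_{s,\lambda_0}$.

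The case where one (or both) of $\lambda_0,\mu_0$ equals $\infty$ is handled by exactly the same argument applied to the reparametrised family $\bar\xi(\lambda)$, $\bar{\mathrm v}(\lambda)$, $\bar{\mathrm w}(\lambda)$ introduced before Theorem \ref{t1}, expanding around $\lambda=0$ in the variable $1/\lambda$. Since the reparametrisation is a rescaling by a factor depending only on $\lambda$, commutativity of the flows and the conservation-law property are preserved and the previous coefficient-extraction argument applies verbatim.

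The delicate point, and the place I would be most careful, is the legitimacy of the termwise expansion. Each $\xi(\lambda)$ is an infinite differential series and the bracket \eqref{bracket} is itself an infinite sum. Nonetheless, as noted in the remark following \eqref{conserve}, both \eqref{bracket} and \eqref{conserve} respect the natural grading by differential degree: when $\xi(\lambda)$, $\mathrm v(\mu)$ are filtered by the degree of monomials in the jet variables, only finitely many terms contribute to any fixed graded piece of the output. The coefficients $\xi_{s,\lambda_0}$, $\mathcal v_{s,\lambda_0}$ inherit this filtration (being obtained by differentiating in a parameter that does not enter the jet variables), so interchanging the $\varepsilon$-expansion with the infinite sums in \eqref{bracket} and \eqref{conserve} is legitimate degree-by-degree. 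This is the main technical point, but it is built into the formalism the authors have already set up.
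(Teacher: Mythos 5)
Your proposal is correct and matches the paper's own (largely implicit) derivation: the authors present Corollary \ref{cor0} as following from Theorem \ref{t1} precisely by expanding the identities $[\xi(\lambda),\xi(\mu)]=0$ and $\mathcal L_{\xi(\lambda)}\mathrm v(\mu)\sim 0$ in powers of $\varepsilon$ and extracting coefficients, using the bilinearity of \eqref{bracket} and the fact that these operations respect the grading by differential degree. Your explicit attention to the legitimacy of the termwise expansion is a point the paper only gestures at in the remark after \eqref{conserve}, but it is the same argument.
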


The next theorem is closely related to Corollary \ref{cor0} and deals with degeneration of the differential constraints that we observed in Section \ref{sect1.1} for Type I and Type III, but now in the general case. 

\begin{Theorem}\label{t2}
In the settings of Theorem \ref{t1},  let  $\lambda_i$  be a zero of $m(\lambda)$, i.e., $m(\lambda_i) = 0$  $(i\in\{1,2,\dots, k\})$.
Then all the coefficients $\xi_{s,\lambda_i}$, $\mathrm v_{s,\lambda_i}$ and $\mathrm w_{s,\lambda_i}$ of $\varepsilon$-expansions  \eqref{eq:expandxi} and \eqref{eq:expandvw} at the point $\lambda_i$ are differential polynomials  so that $u_t = \xi_{s,\lambda_i}$ is a usual evolutionary equation as in \eqref{eq:1}.  In particular, for $s=0$ these equations have the following form
\begin{equation}
\label{eq:explform}
u_{t_{\lambda_i}} = \xi (\lambda_i)  =  \left(\frac{1}{\sqrt {\sigma (\lambda_i)}}\right)_{xxx}  (L - \lambda_i\Id)^{-1}\zeta + 
\frac{1}{\sqrt {\sigma (\lambda_i)}} (L - \lambda_i\Id)^{-1} u_x,   \qquad \mbox{for $\lambda_i \ne \infty$}
\end{equation}
and
\begin{equation}
\label{eq:explforminfty}
u_{t_{\infty}} = \xi(\infty) =  \tfrac{1}{2} f_{xxx} \, \zeta + \left(L + \tfrac{1}{2} f \Id\right) u_x,      \qquad \mbox{for $\lambda_i = \infty$}.
\end{equation}
\end{Theorem}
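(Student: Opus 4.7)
The plan hinges on the observation that although $\mathrm w(\sigma,m)$ is a formal differential series in the $x$-derivatives of $\sigma$, it depends on $m$ in a very controlled way: in each fixed differential degree only finitely many powers of $m$ appear. Once $m(\lambda)$ vanishes at $\lambda_i$, this will collapse the $\varepsilon$-expansion at $\lambda_i$ into a finite sum of differential polynomials in each $\varepsilon$-order.

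I would first reorganise the definition \eqref{second1} by powers of $m$:
\[
\mathrm w(\sigma,m) \;=\; \sqrt 2\sum_{s\ge 0}(-1)^s\, m^s\,\delta\mathcal u_{2s+1}(\sigma) \;=\; \sum_{s\ge 0} m^s\,\mathrm w^{(s)}(\sigma),
\]
where each $\mathrm w^{(s)}(\sigma)$ is a concrete differential polynomial in $\sigma_x,\sigma_{xx},\dots$ of differential degree $2s$ with coefficients rational in $\sqrt\sigma$; in particular $\mathrm w^{(0)}(\sigma)=1/\sqrt\sigma$. An identical decomposition applies to $\mathrm v(\sigma,m)$ via \eqref{first1}. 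Substituting $\sigma = \sigma(\lambda_i+\varepsilon;u)$, which is analytic in $\varepsilon$ wherever $\sigma(\lambda_i;u)\ne 0,\infty$, together with $m = m(\lambda_i+\varepsilon)=O(\varepsilon)$, each summand $m^s\mathrm w^{(s)}(\sigma)$ contributes an $O(\varepsilon^s)$ power series whose $\varepsilon$-coefficients are differential polynomials in $u^\alpha,u^\alpha_x,\dots$. Consequently the $\varepsilon^k$-coefficient of $\mathrm w(\lambda_i+\varepsilon)$ is a finite sum over $s\le k$ of such differential polynomials, and the same reasoning applies to $\mathrm v$. Plugging into \eqref{fields1}, whose prefactors $(L-\lambda\Id)^{-1}$ and $D^3$ preserve the differential-polynomial property, yields that each $\xi_{s,\lambda_i}$ is a differential polynomial as claimed.

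For the explicit $s=0$ formulas I would specialise to $\varepsilon=0$. At $\lambda=\lambda_i$ only the $s=0$ summand survives in the $m$-expansion above, so $\mathrm w(\lambda_i) = 1/\sqrt{\sigma(\lambda_i)}$; substituting directly into \eqref{fields1} reproduces \eqref{eq:explform}. For $\lambda_i=\infty$ I would switch to the bar-parametrisation \eqref{eq:formal2} and run the same argument with $\bar\sigma,\bar m,\Id-\lambda L$ in place of $\sigma,m,L-\lambda\Id$. The condition ``$m$ has a root at infinity'' is precisely $\bar m(0)=0$, at which point the constraint in \eqref{eq:formal3} degenerates to $q=\tfrac12 f(u)$; plugging this into $u_{t_\infty} = q_{xxx}\zeta+(L+q\Id)u_x$ yields \eqref{eq:explforminfty}.

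The main technical obstacle I anticipate is $\lambda_i$'s that happen to lie in the spectrum of $L(u)$ on the region under consideration. Both $\sigma(\lambda_i;u)$ and $(L-\lambda_i\Id)^{-1}$ can then develop singularities, and one has to verify that these cancel in the combinations appearing in $\xi(\lambda_i)$ so as to leave a bona fide differential polynomial. In the concrete examples of Section \ref{sect1.1} the factor $p(L)$ in \eqref{pzeta} is precisely arranged to absorb these poles; in the abstract setting of Theorem \ref{t1} one either restricts to $u$ with $\lambda_i\notin\operatorname{Spectrum} L(u)$ or carries out a formal Laurent expansion at $\lambda_i$ and verifies the cancellation of residues order by order using the identity \eqref{r3}.
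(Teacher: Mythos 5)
Your proposal is correct and follows essentially the same route as the paper: the authors likewise reorganise $\mathrm v(\lambda)$ and $\mathrm w(\lambda)$ as $\sqrt2\sum_s (m(\lambda))^s\,\mathcal v_s(\lambda)$, observe that $m(\lambda_i+\varepsilon)=O(\varepsilon)$ forces each $\varepsilon^s$-coefficient to involve only the first $s+1$ terms (hence a differential polynomial of degree at most $2s$), and then obtain the $s=0$ formulas by setting $m(\lambda_i)=0$ in the constraint to get $\mathrm w(\lambda_i)=1/\sqrt{\sigma(\lambda_i)}$, respectively $q=\tfrac12 f$ at infinity. Your closing caveat about $\lambda_i\in\operatorname{Spectrum}L(u)$ is a sensible observation but is not addressed in the paper, which works locally where $L-\lambda_i\Id$ is invertible.
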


Summarising the statements of Theorems \ref{t1}, \ref{t2} and Corollary \ref{cor0} we come to the following conclusion.  
For each $\lambda \in \bar {\mathbb C} = \mathbb C \cup \{\infty\}$ we define an evolutionary multi-component PDE system \eqref{eq:main}, \eqref{eq:formal3} with a differential constraint as in \eqref{eq:1v}.  The corresponding (formal) evolutionary flows $u_{t_{\lambda}} = \xi (\lambda)$ pairwise commute and admit an infinite family of common (formal) conservation laws also parameterised by $\lambda \in \bar {\mathbb C}$.  For some special values of the parameter $\lambda$, namely for the zeros $\lambda_1, \lambda_2,\dots$ of the function $m(\lambda)$ ($\infty$ is also allowed when appropriate),  the corresponding PDE equations $u_{t_{\lambda_i}} = \xi (\lambda_i)$ are {\it usual} evolutionary multi-component PDEs whose r.h.s. are differential polynomials as in \eqref{eq:1}.  Each $\lambda_i$ generates hierarchies of commuting {\it non-formal} symmetry fields  $\xi_{s,\lambda_i}$ and {\it non-formal} conservation laws $\mathrm v_{s,\lambda_i}$ for the whole family $u_{t_{\lambda}} = \xi (\lambda)$ of {\it formal} PDE systems.  Moreover, the members of these hierarchies are defined by means of an explicit iterative procedure in terms of the function $\sigma(\lambda; u)$ and vector field $\zeta$.

Thus, Theorems \ref{t1} and \ref{t2} give a recipe for constructing multi-component integrable PDEs starting from a Nijenhuis operator  $L$ and its conservation law $f$ satisfying certain conditions. However, in order to construct a specific example of such a system, we need to find  a function $\sigma(u, \lambda)$ and a vector field $\zeta(u)$ satisfying \eqref{eq:05} and \eqref{zeta2}, that is, to solve a (possibly, infinite) system of PDEs.  It is straightforward to check that  the function  $\sigma$ and vector field $\zeta$ given  by \eqref{eq:7cbis1}  and  \eqref{pzeta}  in  Section \ref{sect1.1}  are solutions of \eqref{eq:05} and \eqref{zeta2}. The construction from  Theorems \ref{t1}, \ref{t2}  applied to these $\sigma$ and $\zeta$  gives the   integrable systems of  Types I--IV from Section \ref{sect1.1} so that Theorem \ref{thm01} immediately follows. 


The next theorem shows that in the differentially non-degenerate case,   $\sigma$ and $\zeta$ given  by \eqref{eq:7cbis1}  and  \eqref{pzeta}  provide the only non-trivial solution of \eqref{eq:05} and \eqref{zeta2}.

\begin{Theorem}\label{t3}  Let $L$ be a differentially non-degenerate Nijenhuis operator and $f$ a conservation law of $L$ such that at a point $p\in  M^n$ the 1-forms $\ddd f,L^*\ddd f,\dots ,(L^{n-1})^{*} \ddd f$ are linearly independent.   Assume that there exist $\sigma(\lambda; u)$, $m(\lambda)$ and $\zeta$ satisfying \eqref{eq:05} and \eqref{zeta2} with $m(\lambda)\ne 0$. Then, in a small neighborhood of $p$, the Nijenhuis operator $L$, functions\footnote{We recall that $\sigma(\lambda; u)$ and $m(\lambda)$ are defined up to simultaneous multiplication by an arbitrary function $c(\lambda)$ and this freedom is assumed here.} $\sigma(\lambda; u)$, $m(\lambda)$  and  vector field $\zeta$  are as in Section \ref{sect1.1}, see \eqref{eq:Ldirectsum}, \eqref{eq:7cbis1}, \eqref{eq:mpolynomial} and \eqref{pzeta} respectively. 
\end{Theorem}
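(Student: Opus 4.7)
The strategy is to pass to eigenvalue coordinates for $L$ near $p$ and read the direct-sum decomposition off from the combined constraints \eqref{eq:05} and \eqref{zeta2}. For a generic $p$ at which $L$ has $n$ distinct eigenvalues (the special case following by continuity, or by performing the analogous analysis in companion coordinates), choose coordinates $x^1, \dots, x^n$ with $L = \operatorname{diag}(x^1, \dots, x^n)$. In these coordinates \eqref{eq:05} becomes $(x^i - \lambda)\partial_i \sigma = \sigma \partial_i f$, equivalently $\partial_i \log \sigma = \phi_i/(x^i - \lambda)$ with $\phi_i := \partial_i f$. Closedness of $\ddd \log \sigma$ for every $\lambda$, combined with distinctness of the $x^i$, forces $\partial_i \partial_j f = 0$ for $i \neq j$, so $f$ is separable and $\phi_i = \phi_i(x^i)$; the linear-independence hypothesis on $\ddd f, L^*\ddd f, \dots, (L^{n-1})^*\ddd f$ gives $\phi_i \neq 0$ everywhere.

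Setting $A(\lambda; u) := C + \sum_j \zeta^j \phi_j(x^j)/(x^j - \lambda)$, relation \eqref{zeta2} reads $\sigma \cdot A = m(\lambda)$. Differentiating in $x^i$ and substituting $\partial_i \sigma / \sigma = \phi_i/(x^i - \lambda)$ gives $\phi_i A/(x^i - \lambda) + \partial_i A = 0$; matching partial-fraction coefficients in $\lambda$ yields three key identities: from $(x^i - \lambda)^{-2}$, $\zeta^i \phi_i(\phi_i - 1) = 0$; from $(x^j - \lambda)^{-1}$ with $j \neq i$, the cross equation $\partial_i \zeta^j = \zeta^j \phi_i(x^i)/(x^j - x^i)$; and from $(x^i - \lambda)^{-1}$, the relation $(\partial_i \zeta^i)\phi_i + \zeta^i \phi_i'(x^i) = -\phi_i A^{(i)}$, where $A^{(i)} := C + \sum_{j \neq i} \zeta^j \phi_j/(x^j - x^i)$ is the value of the regular part of $A$ at $\lambda = x^i$. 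By continuity, the first identity dichotomizes each index: either $\phi_i \equiv 1$ (so $\zeta^i$ may be nonzero) or $\zeta^i \equiv 0$ on a neighborhood of $p$.

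Let $S_0 := \{i : \zeta^i \not\equiv 0\}$; on $S_0$ we have $\phi_i \equiv 1$. For $i \notin S_0$ I claim $\phi_i$ is a constant negative integer. Since $m = \sigma A$ is a single-valued function of $\lambda$ alone and $A$ is rational in $\lambda$, the factor $\sigma$ must be meromorphic in $\lambda$ near each $\lambda = x^i(u)$; but $\sigma \sim (x^i - \lambda)^{\phi_i(x^i)}$ there (up to an analytic nonvanishing factor), which forces $\phi_i(x^i) \in \mathbb Z$ for every value of $x^i$, hence $\phi_i \equiv c_i \in \mathbb Z$ by continuity. A positive $c_i$ combined with the vanishing $A^{(i)} = 0$ (implied by the third identity when $\zeta^i \equiv 0$) would give $m$ a zero at the moving point $\lambda = x^i(u)$, contradicting $m \not\equiv 0$; hence $c_i = -\ell_i$ with $\ell_i \in \mathbb Z_{>0}$. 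Grouping the indices $i \notin S_0$ by the value of $\ell_i$ partitions them into subsets $S_1, \dots, S_N$.

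The partition $\{1, \dots, n\} = S_0 \sqcup S_1 \sqcup \dots \sqcup S_N$ yields the coordinate splitting $\mathrm M^n = U_0 \times \dots \times U_N$ and the decomposition $L = L_0 \oplus \dots \oplus L_N$ with $L_j = \operatorname{diag}(x^i : i \in S_j)$. One then reads off $f = \operatorname{tr} L_0 - \sum_{j \geq 1} \ell_j \operatorname{tr} L_j$ and $\sigma = c(\lambda)\det(L_0 - \lambda \Id)/\prod_{j \geq 1}\det(L_j - \lambda \Id)^{\ell_j}$, matching \eqref{eq:7cbis1} after the allowed rescaling. The vector field $\zeta$ is reconstructed from the cross equation for $\partial_i \zeta^j$ together with the relation for $\partial_i \zeta^i$ on $S_0$, yielding $\zeta = p(L)\zeta_0$ as in \eqref{pzeta}; finally $m(\lambda) = \sigma \cdot A$ is a polynomial in $\lambda$ of degree $d = n_0 - \sum_j \ell_j n_j$. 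Conversion of each $L_j$ to companion form within its patch $U_j$ completes the identification with Section \ref{sect1.1}. I expect the main obstacle to be the constancy-and-integrality claim for $\phi_i$ when $\zeta^i \equiv 0$: ruling out branch singularities of $\sigma$ in $\lambda$ requires careful complex-analytic bookkeeping (treating $u$ as a parameter), and matching the exact pole order $\ell_i$ of $\sigma$ to the order of vanishing of $A$ at $\lambda = x^i$, thereby fixing $\zeta$ uniquely in terms of $L$ and $f$, calls for iterating the cross equations to track higher-order $\lambda$-expansions of $A$.
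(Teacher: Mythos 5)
Your proposal is correct and follows the same overall route as the paper's proof in Section \ref{sect5}: diagonalise $L$, observe that $f$ separates so that \eqref{eq:05} decouples into ODEs giving $\sigma = c(\lambda)\prod_i\sigma_i$ with $\sigma_i$ behaving like $(x^i-\lambda)^{f_i(x^i)}$ times a bounded nonvanishing factor (the integration by parts in \eqref{eq:pi}), use the $u$-independence of $m(\lambda)=\sigma\cdot A$ at the moving points $\lambda = x^i(u)$ to force each $f_i$ to be a nonzero integer constant, equal to $1$ whenever $\zeta^i\not\equiv 0$ and to $-\ell_i<0$ otherwise, and then group the coordinates by the value of $\ell_i$. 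The tactical differences are worth noting. You obtain the dichotomy from the pointwise identity $\zeta^i\phi_i(\phi_i-1)=0$, extracted by differentiating $\sigma A = m$ in $x^i$ and matching partial fractions in $\lambda$ — a clean algebraic step that is not in the paper, which instead reads the pole order of the product in \eqref{eq:6} directly (Lemma \ref{prop:3}); to make your ``by continuity'' upgrade of this dichotomy airtight you should establish constancy and integrality of \emph{all} the $\phi_i$ first (your meromorphy argument does not use the dichotomy, so this is only a reordering). Conversely, the step you flag as the main obstacle — fixing $\zeta$ and proving polynomiality of $m$ — does not require iterating the cross equations or tracking higher-order $\lambda$-expansions of $A$: once the exponents are known, \eqref{zeta2} collapses to the single rational identity \eqref{eq:7}, from which $\widehat m$ is immediately a polynomial of degree at most $d=k-\ell_{k+1}-\dots-\ell_n$ (any pole of $\widehat m$ would have to sit at a moving point $x^j(u)$, which is impossible for a function of $\lambda$ alone), and each $\zeta^i$ is read off by a single partial-fraction/residue evaluation at $\lambda=x^i$, giving the closed formula \eqref{eq:bols8}. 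This replaces your proposed iteration entirely and is the only substantive simplification the paper has over your plan.
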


As already mentioned above, 
Theorem \ref{thm01}   follows directly from  Theorems  \ref{t1}, \ref{t2} by taking    $\sigma(\lambda)$   and $\zeta$ given by \eqref{eq:7cbis1}  and  \eqref{pzeta}.   Theorems \ref{t1} and \ref{t2}  are proved  in Section \ref{sect3} and Theorem \ref{t3}  in Section \ref{sect5}.


\subsection{Parameters  corresponding to known integrable systems}\label{sect2.3}

In this section we show that for particular choice of the parameters,  Type I -- IV equations from Section \ref{sect1.1} contain many famous integrable systems so that our approach allows one to generate a vast amount of different integrable systems in a unifying manner.

\begin{Ex}[KdV, Camassa-Holm, Dullin-Gottwald-Holm and their generalisations]\label{kdv}
\rm{
In dimension $n = 1$, the differentially non-degenerate Nijenhuis operator is   $L = u$. Due to Theorem \ref{t3}, the only possible $\sigma(\lambda)$ is $u - \lambda$ and then $\zeta = m_0 + m_1 u$.  Notice that 
$$
\mathcal L_\zeta \sigma(\lambda) = m_0 + m_1 u = m_0 + \lambda m_1 + m_1 (u - \lambda) = m(\lambda) + m_1 \sigma(\lambda),
$$
as required by \eqref{zeta2}.

The Type I equation in this case is
$$
\begin{aligned}
u_t & = q_{xxx} \frac{m_0 + m_1 u}{u - \lambda} + q \frac{u_x}{u - \lambda}, \\
0 & = (m_0 + \lambda m_1) q_{xx} q - \frac{1}{2} (m_0 + \lambda m_1) (q_x)^2 + (u - \lambda) q^2 - 1.
\end{aligned}
$$
This is a three-parameter ($m_0$, $m_1$ and $\lambda$) family of integrable evolutionary PDEs with differential constraint. 

If $m_1 \neq 0$, then taking $\lambda_0 = - \frac{m_0}{m_1}$ we get Type II equation
$$
u_t = m_1 \Big(\frac{1}{\sqrt{u - \lambda_0}}\Big)_{xxx}  + \frac{u_x}{(u - \lambda_0)^{3/2}}.
$$

This is a two-parameter ($\lambda_0$ and $m_1 \neq 0$)\footnote{The parameter $\lambda_0$ is not essential unless we consider the limit as $\lambda_0 \to\infty$.} family of equations. For  $\lambda_0 = 0$ it yields (after rescaling) the reduction of the coupled Harry Dym equation \cite[eqn. 26a]{fordy2}. It also appeared in  \cite{ch} as the first flow of the inverse Camassa-Holm hierarchy (flow $m^{(0)}_t$ in Section ``Bihamiltonian structure'' from \cite{ch}).

The Type III equation takes the form  
$$
\begin{aligned}
u_t & = q_{xxx} (m_0 + m_1 u) + (u + q) u_x, \\
\frac{u}{2} & = - \frac{m_1}{2} q_{xx} + q.
\end{aligned}
$$
This is a two-parameter family of the PDEs with a constraint. Differentiating the constraint we get an expression $m_1 q_{xxx} = 2 q_x - u_x$. Substituting it into the first equation and renaming the coefficient we obtain Dullin-Gottwald-Holm equation \cite{gdh}.  The case  $m_0 = 0$ gives the Camassa-Holm equation.

Finally, the Type IV equation corresponds to $m_1 = 0$, leading to the celebrated KdV equation
$$
u_t = \frac{m_0}{2}u_{xxx} + \frac{3}{2} u u_x. 
$$
}
\end{Ex}

\begin{Ex}[Coupled KdV and Harry Dym,  Kaup-Boussinesq and Ito systems]\label{ckdv}
\rm{
Take an arbitrary  $n$ and  consider the (differentially non-degenerate)  Nijenhuis operator $L=L_{\mathsf{comp}}$  given by the first formula of \eqref{dnd}. In the notation of  Section \ref{sect1.1},  we take  $N = 0$, $n = \ell_0 = d$ and $m(\lambda) = m_n \lambda^n + m_{n - 1} \lambda^{n - 1} + \dots + m_0$. Then
$$
\begin{aligned}
\sigma(\lambda) & = \operatorname{det} (L - \lambda \operatorname{Id}) = (-1)^n (\lambda^n - u^1 \lambda^{n - 1} - \dots - u^n), \\
\zeta & =  (-1)^{n+1} \left( (m_{n - 1} + m_n u^1) \pd{}{u^1} + \dots + (m_0 + m_n u^n)\pd{}{u^n}\right).
\end{aligned}
$$
It is easy to check that $\mathcal L_\zeta \sigma(\lambda) = m(\lambda) - (-1)^n m_n \sigma(\lambda)$. For every root $\lambda_i$ of $m(\lambda)$,  the Type II equation is
$$
u_t = \Bigg( \frac{1}{\sqrt{\det (L - \lambda_i\Id)}}\Bigg)_{xxx}  (L - \lambda_i \operatorname{Id})^{-1} \zeta 
+  \frac{1}{{\sqrt{\det (L - \lambda_i\Id)}}\big)}(L - \lambda_i \operatorname{Id})^{-1}u_x.
$$
This is  a $(n + 1)$-parameter family of integrable equations (with $m_i$'s as parameters involved in the formula for $\zeta$ above). For the rather special case $m_n = 1$ and $m_0 = \dots = m_{n - 1} = 0$, we get $\lambda_i = 0$ and taking $L=L_{\mathsf{comp}}$ as in \eqref{dnd} we come to coupled Harry Dym equations 
$$
u_t =   \left(\frac{1}{\sqrt {u^n}}\right)_{xxx}  e_1 \, + \, \left(\frac{1}{\sqrt {u^n}}\right) L_{\mathsf{comp}}^{-1} \, u_x, \quad e_1=\begin{pmatrix}  1 \\ 0 \\ \vdots \\ 0  \end{pmatrix}, \  u=\begin{pmatrix}  u^1 \\ u^2 \\ \vdots \\ u^n  \end{pmatrix}, \  
$$ 
introduced in \cite{fordy2} by  M. Antonowicz and A. Fordy.    

If $m_n = 0$, then $m(\lambda)$ has a root at infinity and Type IV equation is
$$
u_t = \tfrac{1}{2}\big(\operatorname{tr} L\big)_{xxx} \zeta + \big(L +\tfrac{1}{2}\operatorname{tr} L \operatorname{Id}
 \big) u_x.
$$
This is a family of integrable multi-component PDE systems with $n$ parameters $m_0, \dots, m_{n-1}$. More specifically, for $L=L_{\mathsf{comp}}$ given by \eqref{dnd} we get
$$
u_t = \tfrac{1}{2} \, u^1_{xxx} \, \zeta + \big(L_{\mathrm{comp}} +\tfrac{1}{2} \, u^1 \operatorname{Id} 
\big) u_x, \quad \zeta =  \begin{pmatrix}  m_{n-1} \\ \vdots \\ m_1 \\ m_0   \end{pmatrix} =\sum_{i=1}^n m_{n-i} e_i, \ m_i\in\mathbb R.    
$$ 
For  $\zeta =   e_i$, $i=1,\dots, n$, we get $n$ different systems known as coupled KdV systems and introduced by Antonowicz and Fordy in \cite{fordy}. 

The latter have two important examples for $n = 2$. For $m_2 = m_1 = 0, m_0 \neq 0$ after coordinate change $u^1 \to - u^1, u^2 \to - u^2$ we get the Kaup-Boussinesq system \cite[eqn. (4)]{pav}:
$$
\begin{aligned}
u^1_t & = u^2_x - \frac{3}{2} u^1 u^1_x, \\
u^2_t & = \frac{m_0}{2} u^1_{xxx} - u^2 u^1_x - \frac{1}{2} u^1 u^2_x.
\end{aligned}
$$
For $m_0 = m_2 = 0, m_1 \neq 0$ after coordinate change $u^1 \to - u^1, u^2 \to - u^2$ the same formula yields Ito system \cite[eqn. (25)]{pav}:
$$
\begin{aligned}
u^1_t & = \frac{m_1}{2} u^1_{xxx} - \frac{3}{2} u^1 u^1_x + u^2_x, \\
u^2_t & = - u^2 u^1_x - \frac{1}{2} u^1 u^2_x.
\end{aligned}
$$
}
\end{Ex}

\begin{Ex}[Marvan-Pavlov system]
\rm{
Now consider a pair of differentially non-degenerate Nijenhuis operators $L_0, L_1$ in dimensions $n_0$ and $n_1$. Assume   $n_0 - n_1 = d \geq 0$ and  consider  coordinates $u^1, \dots, u^{n_0}$ and $v^1, \dots, v^{n_1}$ in which $L_0$ and $L_1$ are given by the first formula of \eqref{dnd}. In the notations of Section \ref{sect1.1},  take $N = 1$ and $\ell_1 = 1$. We get $m(\lambda) = m_d \lambda^d + m_{d - 1} \lambda^{d - 1} + \dots + m_0$. In these coordinates
$$
\sigma(\lambda) = (-1)^d\frac{\lambda^{n_0} - u^1 \lambda^{n_0 - 1} - \dots - u^{n_0}}{\lambda^{n_1} - v^1 \lambda^{n_1 - 1} - \dots - v^{n_1}}
$$
and
$$
\begin{aligned}
\zeta & = (-1)^d m_d \Bigg( \sum_{r = 1}^{n_1} v^r \pd{}{u^r} - \sum_{j = 1}^{n_0} u^j \pd{}{u^j}\Bigg) + \sum_{s = 1}^d (-1)^d m_{d - s} \Bigg(- \pd{}{u^s} + \sum_{j = 1}^{n_1} v^{j} \pd{}{u^{j + s}}\Bigg)
\end{aligned}.
$$
By direct computation we have $\zeta(\sigma(\lambda)) = m(\lambda) - (-1)^d \sigma(\lambda)$. For every root $\lambda_i$ of $m(\lambda)$, the Type II equation is
$$
\begin{aligned}
u_t = \Bigg( \sqrt{(-1)^d \frac{\lambda_i^{n_1} - \sum_{r = 1}^{n_1} v^r \lambda_i^{n_1 - r}}{\lambda_i^{n_0} - \sum_{j = 1}^{n_0} u^j \lambda_i^{n_0 - j}}}\Bigg)_{xxx}&  (L - \lambda_i \operatorname{Id})^{-1} \zeta + \\
& + \sqrt{(-1)^d \frac{\lambda_i^{n_1} - \sum_{r = 1}^{n_1} v^r \lambda_i^{n_1 - r}}{\lambda_i^{n_0} - \sum_{j = 1}^{n_0} u^j \lambda_i^{n_0 - j}}} (L - \lambda_i \operatorname{Id})^{-1} u_x.
\end{aligned}
$$
This is a $d$-parameter family of integrable equations (with the coefficients of $m(\lambda)$ as parameters involved into the formula for $\zeta$). If $m_d = 0$ and the infinity is a root of $m(\lambda)$, we get the Type IV equation
$$
u_t = \frac{1}{2} \big(\operatorname{tr}L_1 - \operatorname{tr}L_2\big)_{xxx}\zeta  + \Big(L + \frac{1}{2} 
\big(\operatorname{tr}L_1 - \operatorname{tr}L_2\big) \operatorname{Id}\Big) u_x.
$$
This is a $d$-parameter family of integrable equations. Taking $m_0 \neq 0$ and all $m_i = 0, i \geq 1$ yields example by M. Marvan and  M. Pavlov  \cite{pm, pav}. 
}\end{Ex}

\begin{Ex}[Two-component Camassa-Holm and Dullin-Gottwald-Holm systems]
\rm{
Fix $n = 2$ and consider the Nijenhuis operator of the from
$$
L = \left( \begin{array}{cc}
     2u^1 & u^2  \\
     u^2 & 0  
\end{array}\right).
$$
This operator is related to left-symmetric algebras and plays an important role in the linearization problem (see \cite{nij2} for details). We take
$$
\sigma(\lambda) = \operatorname{det} (L - \lambda \operatorname{Id}) = \lambda^2 - 2 u^1 \lambda - (u^2)^2, \quad m(\lambda) = m_2 \lambda^2 + m_1 \lambda + m_0.
$$
The vector field $\zeta$ is
$$
\zeta = - \Big(\frac{m_1}{2} + m_2 u^1\Big) \pd{}{u^1} - \Big(\frac{m_0}{2 u^2} + \frac{m_2}{2} u^2\Big) \pd{}{u^2}.
$$
We get $\mathcal L_\zeta(\sigma(\lambda)) = m(\lambda) - m_2 \sigma(\lambda)$. For $m_2 \neq 0$, the Type III system in this setting is
$$
\begin{aligned}
u^1_t & = q_{xxx} \Big( - \frac{m_1}{2} - m_0 u^1 \Big) + 2 u^1 u^1_x + u^2 u^2_x + q u^1_x, \\
u^2_t & = q_{xxx} \Big(- \frac{m_2}{2 u^2} - \frac{m_0}{2} u^2\Big) + u^2 u^1_x + q u^2_x, \\
u^1 & = \frac{m_2}{2} q_{xx} + q.
\end{aligned}
$$
Differentiating the last equation and rearranging terms, we get $- \frac{m_2}{2}q_{xxx} = q_x - u^1_x$. Substituting it into the first, we get the equivalent form of the previous PDE with constraint
$$
\begin{aligned}
u^1_t & = \frac{m_1}{2} q_{xxx} + 2 u^1 q_x + q u^1_x + u^2u^2_x, \\
u^2_t & = - \frac{m_0}{4} \frac{q_{xxx}}{u^2} + (qu^2)_x, \\
u^1 & = q + \frac{m_2}{2} q_{xx}.
\end{aligned}
$$
This is a 3-parameter family of integrable systems. For $m_2 = 0$ we obtain the two-component Dullin-Gottwald-Holm equation \cite[(3)]{gdh2}. If, in addition,  $m_1 = 0$, then we get two-component Camassa-Holm equation \cite[(3) and (4)]{ch2}. 
}
\end{Ex}

\begin{Ex}  \label{eq:3.5} 
\rm{
We actually can generalise the equations of type I and III to the case of $k$ constraints, $k > 1$. Consider the expansion \eqref{eq:wsinfty} up to $\varepsilon^2$ :
$$
\begin{aligned}
\bar{\mathrm w} (\varepsilon) & = 1 + \varepsilon \mathcal w_{0, \epsilon} + \varepsilon^2 \mathcal w_{1, \infty} + \dots.
\end{aligned}
$$
If $\bar m(0) \neq 0$, then $\mathcal w_{0, \infty}$ and $\mathcal w_{1, \infty}$ are formal differential series. Now substitute the decompositions for $\bar{\mathrm w} (\varepsilon), \bar \sigma(\varepsilon)$ and $\bar m(\varepsilon)$ into the Gelfand-Dikii identity \eqref{r3}. Renaming $q^1 = \mathcal w_{0, \infty}$, $q^2=\mathcal w_{1, \infty} $ we get
\begin{equation*}
    \begin{aligned}
    0 & = \bar m(\varepsilon) \left(\bar{\mathrm w}_{xx} (\varepsilon) \bar{\mathrm w} (\varepsilon) - \frac{1}{2} (\bar{\mathrm w}_{x} (\varepsilon))^2\right) + \bar \sigma(\varepsilon) (\bar{\mathrm w}(\varepsilon))^2 - 1 = \varepsilon \Big( \bar \sigma_1 + 2 q^1 + \bar{m_0} q^1_{xx} \Big) + \\
    & + \varepsilon^2  \Big(\bar \sigma_2 + 2 q^2 + \bar m_0 q^2_{xx} + \bar m_1 q^1_{xx} + \bar m_0 q^1 q^1_{xx} + 2 \bar \sigma_1 q^1 + (q_1)^2 - \frac{1}{2} \bar m_0 (q^1_x)^2\Big) + \dots.
    \end{aligned}
\end{equation*}
This yields differential relations for $q^1, q^2$. Differential operator $\operatorname{Id} + D^2$ is formally invertible, so these constraints imply that $q^1$ is a differential series in $\bar \sigma_1$ and its derivatives and $q^2$ is a differential series in $\bar \sigma_1, \bar \sigma_2$ and their derivatives.

Now consider the expansion \eqref{eq:xislambda11} up to $\varepsilon^2$:
$$
\begin{aligned}
\bar{\xi}(\varepsilon) & = \xi_{-1, \infty} + \varepsilon \xi_{0, \infty} + \varepsilon^2 \xi_{1, \infty} + \dots = \\
& = u_x + \varepsilon \Big( q^1_{xxx} \zeta + (L + q^1 \operatorname{Id})u_x\Big) + \varepsilon^2 \Big( q^2_{xxx} \zeta + q^1_{xxx} L\zeta + (L^2 + q^1 L + q^2 \operatorname{Id})u_x\Big) + \dots. \\
\end{aligned}
$$
Taking $\xi_{1, \infty}$ we get a PDE with two differential constraints
$$
\begin{aligned}
u_t & = q^2_{xxx} \zeta + q^1_{xxx} L\zeta + (L^2 + q^1 L + q^2 \operatorname{Id})u_x, \\
0 & = \bar \sigma_1 + 2 q^1 + \bar m_0 q^1_{xx}, \\
0 & =  \bar \sigma_2 + 2 q^2 + \bar m_0 q^2_{xx} + \bar m_1 q^1_{xx} + \bar m_0 q^1 q^1_{xx} + 2 \bar \sigma_1 q^1 + (q_1)^2 - \frac{1}{2} \bar m_0 (q^1_x)^2.
\end{aligned}
$$
For $L$ differentially non-degenerate, $N = 0$ and $m(t) = t^n$ this yields the general form of Camassa-Holm equation CH($n$,$2$) from \cite{ih}. Taking the expansion up to a higher order, one obtains a greater number of differential constraints. }
\end{Ex}

\begin{Ex}{\rm
In  the previous examples, we have shown that many notable integrable systems are special cases of the systems from Section 2. We now describe  one of the simplest {\it new} examples.  
By construction, integrable systems we deal with are written in invariant form that is independent on the choice of a local coordinate chart. In particular, in order to make our system more symmetric, we may choose local coordinates related to the roots of the polynomial $m(\lambda)$.

 The next 3-component example is build starting with  $\sigma(\lambda) = \det (L - \lambda\Id)$ and  $m(\lambda)= (\lambda_1-\lambda)(\lambda_2-\lambda)(\lambda_3- \lambda)$ and choosing local coordinates  $(u^1,u^2, u^3)$ to be
$$
u^i = \det (\lambda_i \Id - L) \prod_{s\ne i} \frac{1}{\lambda_i - \lambda_s}.
$$  
In particular,   $u^i$ is proportional to $\sigma(\lambda_i)$ with some constant factor which is not essential.   Moreover,  $\zeta = -(u^1, u^2, u^3)^\top$ and $(L-\lambda_i\Id)^{-1} \zeta = e_i$.  Now if we take an arbitrary linear combination of the (commuting) evolutionary PDEs \eqref{eq:t2} of Type II, we get the following integrable system:

\begin{equation} \label{eq:pd_bols} 
 \begin{pmatrix} u_t^1 \\ u_t^2 \\ u_t^3 \end{pmatrix}   = \begin{pmatrix} c_1\left({1}/{\sqrt{u^1}}\right)_{xxx} \\ c_2\left({1}/{\sqrt{u^2}}\right)_{xxx} \\ c_3 \left({1}/{\sqrt{u^3}}\right)_{xxx} \end{pmatrix}  + A(u) \begin{pmatrix} u^1_x \\   u^2_x \\   u^3_x\end{pmatrix} .\end{equation}
where $c_1, c_2, c_3$ are arbitrary constants and $A(u)$ is the $3\times 3$ matrix with the components 
$$
A^j_i  = \frac{u^j}{\lambda_i-\lambda_j}\left( \frac{c_i}{ (u^i)^{3/2}} - \frac{c_j }{ (u^j)^{3/2}}       \right), \quad  j\ne i, \quad \mbox{and} \quad
A^i_i =    - \frac{c_i}{(u^i)^{3/2}}   - \sum_{j\ne i}  A^j_i.    
$$
One can also write it as follows:
$$
u^j_t = c_j \left(\left(\frac{1}{\sqrt{u^j}}\right)_{xxx} - \frac{u^j_x}{(u^j)^{3/2}}\right) + 
\sum_{i\ne j} \frac{u^ju^i_x - u^i u^j_x}{\lambda_i-\lambda_j}\left( \frac{c_i}{ u_i^{3/2}} - \frac{c_j }{ u_j^{3/2}}       \right).
$$
Here the first term represents the system of  three {\it uncoupled}  Harry Dym type equations (see Example \ref{kdv}), but the second term mixes all the variables.

The recursion formula for the conservation laws gives the following explicit  formula for the first six of them.
 The first two  corresponding to the root $\lambda_1$ are  
 \begin{equation} \label{eq:zero}  \sqrt{u^1} \quad \mbox{and} \quad \frac{   
\left( {u^1} u^1_{xx}
-\frac{5}{4}  \left(u^1_x\right)^{2}\right)
-2 {\left(u^1\right)}^{2} 
\left(  
\frac{u^1 + u^2}{\lambda_1-\lambda_2} 
+
\frac{u^1 + u^3}{\lambda_1 - \lambda_3} 
 - 1 \right)}
{ (u^1)^{3/2}}.\end{equation}  
The other four correspond to the roots  $\lambda_2$, $\lambda_3$ and can be obtained from the above formulas by cyclic permutation of  indices 1,2,3. 

One can also find, using the procedure described in Section 2, 
 the formulas for  commuting flows. Actually, the commuting flows of the lowest order are \eqref{eq:pd_bols}  with arbitrarily chosen $c_1, c_2, c_3$. Notice that this example  can be naturally generalised to the case of an arbitrary number of components.

}\end{Ex}


\section{Proofs of Theorems \ref{t1}  and  \ref{t2}} \label{sect3}

We start with  the following Lemma.

\begin{Lemma} \label{lm2}
Under the assumptions of Theorem \ref{t1}, we have
$$
(L^*_\lambda)^{-1} \ddd \sigma(\mu) = \frac{1}{\mu - \lambda} \Bigg( \ddd \sigma(\mu) - \frac{\sigma(\mu)}{\sigma(\lambda)} \ddd \sigma(\lambda) \Bigg).
$$
\end{Lemma}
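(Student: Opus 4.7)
The plan is to verify the identity directly by applying $L^*_\lambda := L^* - \lambda\Id$ to the right-hand side and checking that the result equals $\ddd\sigma(\mu)$. Since $L^*_\lambda$ is invertible precisely when $\lambda\notin\operatorname{Spectrum} L$ (and $\sigma(\lambda)$ is analytic in this domain by the discussion after \eqref{eq:05}), the formula then follows by applying $(L^*_\lambda)^{-1}$ to both sides; the remaining values of $\lambda$ are recovered by analytic continuation.

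The only input needed is the defining relation \eqref{eq:05} applied at the two different values of the parameter:
\begin{equation*}
(L^* - \lambda\Id)\,\ddd\sigma(\lambda) = \sigma(\lambda)\,\ddd f, \qquad (L^* - \mu\Id)\,\ddd\sigma(\mu) = \sigma(\mu)\,\ddd f.
\end{equation*}
Concretely, I would apply $L^*_\lambda$ to the bracketed combination on the right-hand side and use the first identity above to turn the $\ddd\sigma(\lambda)$-term into $\sigma(\lambda)\,\ddd f$, at which point the prefactor $\sigma(\mu)/\sigma(\lambda)$ cancels and the expression collapses to
\begin{equation*}
\tfrac{1}{\mu-\lambda}\bigl(L^*_\lambda\,\ddd\sigma(\mu) - \sigma(\mu)\,\ddd f\bigr).
\end{equation*}
Using the second form of \eqref{eq:05} to rewrite $\sigma(\mu)\,\ddd f = L^*_\mu\,\ddd\sigma(\mu)$, this becomes $\tfrac{1}{\mu-\lambda}(L^*_\lambda - L^*_\mu)\,\ddd\sigma(\mu) = \ddd\sigma(\mu)$, as desired.

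There is no real obstacle: the lemma is essentially a partial-fractions identity for the pencil $L^*_\lambda$, and the two applications of \eqref{eq:05} do all the work. The only point requiring a word of care is the invertibility of $L^*_\lambda$, which is why the statement is phrased for generic $\lambda$ outside the spectrum of $L$; the singular values of $\lambda$ are handled by viewing both sides as meromorphic functions of $\lambda$ and extending by continuity. Apart from that, the proof is a short symbolic computation.
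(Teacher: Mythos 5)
Your proposal is correct and uses exactly the same two ingredients as the paper's proof, namely relation \eqref{eq:05} evaluated at both parameter values $\lambda$ and $\mu$ together with the observation $L^*_\lambda - L^*_\mu = (\mu-\lambda)\Id$; the only cosmetic difference is that you verify the identity by applying $L^*_\lambda$ to the right-hand side, whereas the paper derives it forward by rearranging \eqref{eq:05} and applying $(L^*_\lambda)^{-1}$. Your remark on invertibility of $L^*_\lambda$ off the spectrum is a fair point of care and does not change the substance.
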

\begin{proof}
Condition \eqref{eq:05} reads
\begin{equation}\label{s1}
    L^*\ddd \sigma(\mu) = \sigma(\mu) \ddd f + \mu \ddd \sigma(\mu) = \sigma(\mu) \ddd f + (\mu - \lambda)\ddd \sigma(\mu) + \lambda \ddd \sigma(\mu).
\end{equation}
Recall that by construction
$(L_\lambda^*)^{-1} \ddd f = \frac{1}{\sigma(\lambda)} \ddd \sigma(\lambda).$
Rearranging the terms and multiplying both sides of \eqref{s1} by $(L^*_{\lambda})^{-1}$, we get the statement of Lemma.
\end{proof}

Now let us recall some basic formulas and introduce some notations. We denote the derivative coordinates of order $j$ by $u^\alpha_{x^j}$ and set $u^\alpha_{x^0} = u^\alpha$. Consider a formal evolutionary vector field  $\xi$ with components $\xi^i$. The (Lie) derivative of a formal differential series $\mathrm w$ along $\xi$ is defined by
$$
\mathcal L_\xi \mathrm w = \sum \limits_{j = 0}^\infty \pd{\mathrm w}{u^\alpha_{x^j}} D^j(\xi^\alpha)
$$
with summation over $\alpha$ assumed, $\alpha = 1,\dots,n$. 

Let $\xi = \xi(\lambda)$ be the formal vector field defined by \eqref{eq:xiformal} and associated with the PDEs from Theorem \ref{t1}. We have:
\begin{equation*}
    \begin{aligned}
    & \mathcal L_{\xi(\lambda)} \sigma(\mu) = \pd{\sigma(\mu)}{u^\alpha} \xi^\alpha(\lambda) = \pd{\sigma(\mu)}{u^\alpha} \Big(\mathrm w_{xxx}(\lambda) \big(L - \lambda \operatorname{Id} \big)^{-1} \zeta + \mathrm w(\lambda ) \big(L - \lambda \operatorname{Id}\big)^{-1} u_x\Big)^\alpha = \\
    & = \mathrm w_{xxx}(\lambda) \pd{\sigma^\mu}{u^\alpha} \Big(L^{-1}_\lambda\Big)^\alpha_q \zeta^q + \mathrm w (\lambda) \pd{\sigma^\mu}{u^\alpha} \Big(L^{-1}_\lambda\Big)^\alpha_q u^q_x = \frac{1}{\mu - \lambda}\mathrm w_{xxx} (\lambda) \Big( m(\mu) - \frac{\sigma(\mu)}{\sigma(\lambda)} m(\lambda)\Big) + \\
    & + \frac{1}{\mu - \lambda}\mathrm w(\lambda) \Big( \sigma_x(\mu) - \frac{\sigma(\mu)}{\sigma(\lambda)}\sigma_x (\lambda)\Big).
    \end{aligned}
\end{equation*}
The last step follows from Lemma \ref{lm2}. Further rearranging terms and using  \eqref{id}, we get
\begin{equation}\label{r2}
    \begin{aligned}
    & \mathcal L_{\xi(\lambda)}\sigma(\mu) = \\
    & = \frac{1}{\mu - \lambda} \Bigg(m(\mu) \mathrm w_{xxx} (\lambda) + w(\lambda) \sigma_x(\mu) - \frac{\sigma(\mu)}{\sigma(\lambda)} \Big( m(\lambda) \mathrm w_{xxx} (\lambda) + \mathrm w (\lambda) \sigma_x (\lambda)\Big)\Bigg) = \\
    & = \frac{1}{\mu - \lambda} \Bigg( m(\mu) \mathrm w_{xxx} (\lambda) + w(\lambda) \sigma_x(\mu) - \frac{\sigma(\mu)}{\sigma(\lambda)} \Big( - 2 \sigma(\lambda) \mathrm w_x (\lambda) \Big) \Bigg) = \\
    & = \frac{1}{\mu - \lambda} \Bigg( m(\mu) \mathrm w_{xxx}(\lambda) + 2 \sigma(\mu) \mathrm w_x(\lambda) + \sigma_x(\mu) \mathrm w(\lambda) \Bigg).
    \end{aligned}
\end{equation}

Now let us proceed with the proof. Consider a pair of formal differential series $\mathrm w_1, \mathrm w_2$. We use notation $\mathrm w_1 \sim \mathrm w_2$, if there exists a formal differential series $\mathrm u$, such that $\mathrm w_1 - \mathrm w_2 = D \mathrm u$. In particular, the Leibnitz rule for $D$ implies, that $\mathrm w_1 D (\mathrm w_2) \sim - D(\mathrm w_1) \mathrm w_2$. More generally,  it yields the formula
$$
\mathrm w_1 D^j (\mathrm w_2) \sim (-1)^j D^j(\mathrm w_1) \mathrm w_2.
$$
Using the chain rule $\vpd{\mathrm v(\mu)}{u^\alpha} = \mathrm w(\mu) \pd{\sigma(\mu)}{u^\alpha}$, we get the following sequence of relations
\begin{equation*}
    \begin{aligned}
    & \mathcal L_{\xi(\lambda)} \mathrm v(\mu) = \sum \limits_{j = 0}^\infty \pd{\mathrm v(\mu)}{u^\alpha_{x^j}} D^j (\xi^\alpha(\lambda)) \sim \sum \limits_{j = 0}^\infty (-1)^j D^j \Big(\pd{\mathrm v(\mu)}{u^\alpha_{x^j}}\Big) \xi^\alpha(\lambda) = \vpd{\mathrm v(\mu)}{u^\alpha} \xi^\alpha(\lambda) = \\
    & = \mathrm w (\mu) \pd{\sigma(\mu)}{u^\alpha} \xi^\alpha (\lambda)  = \mathrm w(\mu) \mathcal L_{\xi(\lambda)} \sigma(\mu) = \frac{1}{\mu - \lambda} m(\mu) \mathrm w(\mu) \mathrm w_{xxx} (\lambda) + \frac{2}{\mu - \lambda} \mathrm w(\mu) \mathrm w_x (\lambda) \sigma(\mu) + \\
    & + \frac{1}{\mu - \lambda} \mathrm w(\mu) \mathrm w (\lambda) \sigma_x(\mu) \sim - \frac{1}{\mu - \lambda} m(\mu) \mathrm w_{xxx} (\mu) \mathrm w(\lambda) - \frac{2}{\mu - \lambda} \big(\mathrm w(\mu) \sigma(\mu)\big)_x \mathrm w(\lambda) + \\
    &+ \frac{1}{\mu - \lambda} \mathrm w(\mu) \sigma_x (\mu) \mathrm w(\lambda) = - \frac{1}{\mu - \lambda} \Bigg( m(\mu) \mathrm w_{xxx}(\lambda) + 2 \sigma(\mu) \mathrm w_x(\lambda) + \sigma_x(\mu) \mathrm w(\lambda) \Bigg) \mathrm w(\lambda) = 0\\
    \end{aligned}
\end{equation*}
Here we used formula \eqref{r2} and, again, identity \eqref{id}. Thus, we get $\mathcal L_{\xi(\lambda)} \mathrm v(\mu) \sim 0$, meaning that $\mathrm v (\mu)$ is a formal conservation law for the flow $\xi(\lambda)$. For $\lambda = \infty$ the proof is essentially the same.

Now let us proceed to the commuting flows. We will need the following Lemma.

\begin{Lemma}\label{fom}
Assume that $\sigma(\mu)$ and $\mathrm w(\mu)$ are related by Gelfand-Dikii identity \eqref{r3} and $\mathcal L_{\xi(\lambda)}\sigma(\mu)$ is given by \eqref{r2}. Then 
\begin{equation}\label{r1}
\mathcal L_{\xi(\lambda)} \mathrm w(\mu) = \frac{1}{\mu - \lambda}\Big(\mathrm w_x(\mu) \mathrm w(\lambda) - \mathrm w(\mu) \mathrm w_x(\lambda)\Big).    
\end{equation}
\end{Lemma}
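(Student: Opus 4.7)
Write $W = \mathrm w(\mu)$, $w = \mathrm w(\lambda)$, $S = \sigma(\mu)$, $M = m(\mu)$, and $\dot{\,} := \mathcal L_{\xi(\lambda)}$. The strategy is to differentiate the Gelfand-Dikii identity \eqref{r3} (read at $\mu$) along $\xi(\lambda)$ and to verify that the right-hand side of \eqref{r1} solves the resulting equation. Since $\xi(\lambda)$ is evolutionary, $\dot{\,}$ commutes with $D = d/dx$, and since $M = m(\mu)$ does not depend on $u$, this produces the linearised identity
\[
M\bigl(W_{xx}\dot W + \dot W_{xx}W - W_x\dot W_x\bigr) + 2SW\dot W + \dot S\,W^2 = 0,
\]
with $\dot S = \mathcal L_{\xi(\lambda)}\sigma(\mu)$ already computed in \eqref{r2}.

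Next, insert the ansatz $\dot W = \tfrac{1}{\mu-\lambda}(W_xw - Ww_x)$. Its Wronskian shape forces clean cancellations: direct differentiation gives $\dot W_x = \tfrac{1}{\mu-\lambda}(W_{xx}w - Ww_{xx})$ (the $W_xw_x$ cross terms cancel) and $\dot W_{xx} = \tfrac{1}{\mu-\lambda}(W_{xxx}w + W_{xx}w_x - W_xw_{xx} - Ww_{xxx})$. Expanding $M(W_{xx}\dot W + \dot W_{xx}W - W_x\dot W_x)$, all mixed monomials of the forms $W_{xx}W_xw$, $W_{xx}Ww_x$, $WW_xw_{xx}$ cancel in pairs, leaving only $\tfrac{M}{\mu-\lambda}(WW_{xxx}w - W^2w_{xxx})$. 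Adding the contributions $2SW\dot W$ and $\dot S\,W^2$ and substituting \eqref{r2} for $\dot S$, the $w_{xxx}$- and $w_x$-pieces cancel and the whole expression collapses to
\[
\frac{wW}{\mu-\lambda}\Bigl[m(\mu)\,W_{xxx} + 2\sigma(\mu)\,W_x + \sigma_x(\mu)\,W\Bigr],
\]
which vanishes by identity \eqref{id} applied at $\mu$. So the candidate does satisfy the linearised Gelfand-Dikii identity.

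To conclude, I invoke uniqueness. The Kruskal-Miura recursion \eqref{eq:rec} reconstructs $\mathrm w(\mu)$ uniquely as a formal differential series, starting from the leading term $1/\sqrt{\sigma(\mu)}$; applying $\mathcal L_{\xi(\lambda)}$ to that recursion shows that $\mathcal L_{\xi(\lambda)}\mathrm w(\mu)$ is uniquely determined by $\mathcal L_{\xi(\lambda)}\sigma(\mu)$ together with its own derivative-free leading part. A short calculation using Lemma \ref{lm2} applied to the vector $u_x$ confirms that the ansatz $\tfrac{1}{\mu-\lambda}(W_xw - Ww_x)$ has the correct leading part, so it coincides with $\mathcal L_{\xi(\lambda)}\mathrm w(\mu)$. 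The main effort in executing the plan is the bookkeeping of the cancellations in the middle step; a minor additional care is needed at the special values $\lambda = \infty$ or $\mu = \lambda$, where \eqref{r1} must be interpreted via the obvious Taylor expansion of the Wronskian $W_xw - Ww_x$ in the parameter $\mu-\lambda$.
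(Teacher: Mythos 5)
Your proposal is correct and follows essentially the same route as the paper: differentiate the Gelfand--Dikii identity \eqref{r3} along $\xi(\lambda)$, check that the Wronskian ansatz satisfies the resulting linearised identity using \eqref{r2} and \eqref{id} (your cancellation bookkeeping is right), and conclude by uniqueness of the solution of that linearised identity. The only difference is in how uniqueness is packaged: the paper rewrites the linearised identity as $\mathcal R\big(\mathcal L_{\xi(\lambda)}\mathrm w(\mu)\big) = -\mathrm w^3(\mu)\,\mathcal L_{\xi(\lambda)}\sigma(\mu)$ with $\mathcal R = 2\operatorname{Id} + (\text{degree-raising terms})$ a formally invertible differential operator, so no leading-term matching is needed, whereas you appeal to the step-by-step reconstruction of $\mathrm w(\mu)$ plus a leading-part check --- equivalent in substance, though you should state explicitly that the linearised identity itself determines its formal solution degree by degree, since that (and not merely the uniqueness of $\mathcal L_{\xi(\lambda)}\mathrm w(\mu)$) is what lets you identify the ansatz with the actual Lie derivative.
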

\begin{proof}
We start with applying $\xi(\lambda)$ to Gelfand-Dikii identity \eqref{r3} and multiplying the result by $\mathrm w(\mu)$  (we also take into account the fact that $\mathcal L_\xi$ commute with $D$):
\begin{equation}\label{ss}
\begin{aligned}
0 & = \mathrm w(\mu) \mathcal L_{\xi(\lambda)}\Bigg( m(\mu) \left(\mathrm w_{xx} (\mu) \mathrm w (\mu) - \frac{1}{2} (\mathrm w_{x} (\mu))^2\right) + \sigma(\mu) \mathrm w^2(\mu)\Bigg) = \\
    & = m(\mu) \Big( \mathrm w^2(\mu) \mathcal L_{\xi(\lambda)}\mathrm w_{xx}(\mu) +  \mathrm w_{xx}(\mu) \mathrm w(\mu)\mathcal L_{\xi(\lambda)}\mathrm w(\mu)  - \mathrm w_{x}(\mu) \mathrm w(\mu) \mathcal L_{\xi(\lambda)}\mathrm w_{x}(\mu)\Big) + \\
    &  + 2 \sigma(\mu) \mathrm w^2(\mu) \mathcal L_{\xi(\lambda)}\mathrm w(\mu) +  \mathrm w^3(\mu)\mathcal L_{\xi(\lambda)} \sigma(\mu)= \\
& = \left(   m(\mu) \Big( \mathrm w^2(\mu) D^2   -  \mathrm w_{x}(\mu) \mathrm w(\mu) D  +  \mathrm w_{xx}(\mu) \mathrm w(\mu)\Big) + 2 \sigma(\mu) \mathrm w^2(\mu) \right)\mathcal L_{\xi(\lambda)}\mathrm w(\mu)  + \\
&    +  \mathrm w^3(\mu)\mathcal L_{\xi(\lambda)} \sigma(\mu) = \cal R \big(\mathcal L_{\xi(\lambda)}\mathrm w(\mu) \big)+  \mathrm w^3(\mu)\mathcal L_{\xi(\lambda)} \sigma(\mu) 
\end{aligned}  
\end{equation}
with
$$
\begin{aligned}
\mathcal R &=    m(\mu) \Big( \mathrm w^2(\mu) D^2   -  \mathrm w_{x}(\mu) \mathrm w(\mu) D  +  \mathrm w_{xx}(\mu) \mathrm w(\mu) \Id \Big) + 2 \sigma(\mu) \mathrm w^2(\mu)\Id   = \\
&= \big(2 - m(\mu) \mathrm w(\mu) \mathrm w_{xx} (\mu) + m(\mu) \mathrm w^2_x(\mu)\big) \operatorname{Id} - m(\mu) \mathrm w(\mu) \mathrm w_x(\mu) D + m(\mu) \mathrm w^2(\mu) D^2,
\end{aligned}
$$
 where in the latter relation we substitute $\sigma(\mu) \mathrm w^2(\mu) = 1 -
 m(\mu) \left(\mathrm w_{xx} (\mu) \mathrm w (\mu) - \frac{1}{2} (\mathrm w_{x} (\mu))^2\right)$ from \eqref{r3}.
 Thus, we have the identity
\begin{equation}
\label{eq:Rid}
\mathcal R \big(\mathcal L_{\xi(\lambda)} \mathrm w (\mu)\big) = - \mathrm w^3(\mu) \mathcal L_{\xi(\lambda)}\sigma(\mu) .
\end{equation}

Note that $\cal R$ is a (formally) invertible differential operator.  Therefore, it suffices to verify that 
$\mathcal L_{\xi(\lambda)} \mathrm w (\mu)$ defined by \eqref{r1} satisfies \eqref{eq:Rid} or, equivalently,
$$
\cal R \big(\mathrm w_x(\mu) \mathrm w(\lambda) - \mathrm w(\mu) \mathrm w_x(\lambda)\big) + (\mu -\lambda)\mathrm w^3(\mu) \mathcal L_{\xi(\lambda)}\sigma(\mu) =0.
$$ 

Computing the l.h.s. of this relation gives:
\begin{equation*}
    \begin{aligned}
           & \ \ \ \, 
           m(\mu)\mathrm w^2(\mu) 
              \Big( \mathrm w_x(\mu)  \mathrm w(\lambda) -  \mathrm w(\mu)  \mathrm w_x(\lambda)\Big)_{xx}             
        + m(\mu)\mathrm w_{xx} (\mu) \mathrm w(\mu) 
             \Big( \mathrm w_x(\mu)  \mathrm w(\lambda) -  \mathrm w(\mu)  \mathrm w_x(\lambda)\Big) - \\
     & - m(\mu)\mathrm w_x(\mu) \mathrm w(\mu) 
             \Big( \mathrm w_x(\mu)  \mathrm w(\lambda) -   \mathrm w(\mu) \mathrm w_x(\lambda)\Big)_x + 
         2 \sigma(\mu)  \mathrm w^2(\mu) 
              \Big( \mathrm w_x(\mu)  \mathrm w(\lambda) -  \mathrm w(\mu)  \mathrm w_x(\lambda) \Big) + \\
    & + (\mu - \lambda) \mathrm w^3(\mu) \mathcal L_{\xi(\lambda)}\sigma(\mu) = \\
    & = m(\mu)\mathrm w^2(\mu)\Big( \mathrm w_{xxx} (\mu)    \mathrm w(\lambda)  -  \mathrm w(\mu)   \mathrm w_{xxx}(\lambda)\Big) + 
    2 \sigma(\mu)  \mathrm w^2(\mu) \Big( \mathrm w_x(\mu)  \mathrm w(\lambda) -  \mathrm w(\mu)  \mathrm w_x(\lambda)\Big) \\  &+  (\mu - \lambda) \mathrm w^3(\mu)\mathcal L_{\xi(\lambda)}\sigma(\mu).  
    \end{aligned}
\end{equation*}

Adding and subtracting $\sigma_x(\mu) \mathrm w (\lambda) \mathrm w^3(\mu)$ we arrive to the identity
$$
\begin{aligned}
\Big( (\mu -\lambda) \mathcal L_{\xi(\lambda)}\sigma (\mu) - & m(\mu) \mathrm w_{xxx} (\lambda) - 2 \sigma(\mu) \mathrm w_x (\lambda) - \sigma_x(\mu) \mathrm w(\lambda) \Big)\mathrm w^3(\mu) + \\
+ &\Big( m(\mu) \mathrm w_{xxx}(\mu) + 2 \sigma(\mu) \mathrm w_x(\mu) + \sigma_x(\mu) \mathrm w(\mu) \Big) \mathrm w^2(\mu) \mathrm w(\lambda) = 0,
\end{aligned}
$$
where the first term vanishes due to \eqref{r2} and the second due to \eqref{id}, completing the proof.
\end{proof}

As we deal with evolutionary vector fields, it is enough to check that $\xi(\lambda)$ and $\xi(\nu)$ commute, acting on coordinate functions. Fix three pairwise distinct $\lambda, \mu, \nu$. From Lemma \ref{fom}, we get
\begin{equation*}
    \begin{aligned}
    & \mathcal L_{\xi(\nu)} \mathcal L_{\xi(\lambda)} \sigma(\mu) = \frac{1}{\mu - \lambda}  \mathcal L_{\xi(\nu)} \Big(m(\mu) \mathrm w_{xxx}(\lambda) + 2 \sigma(\mu) \mathrm w_x(\lambda) + \sigma_x(\mu) \mathrm w(\lambda) \Big) = \\
    & = \frac{m(\mu)}{(\mu - \lambda)(\lambda - \nu)} \Big(\mathrm w_{xxxx}(\mu) \mathrm w(\lambda) + 2 \mathrm w_{xxx}(\mu) \mathrm w_x(\lambda) - \mathrm w(\mu) \mathrm w_{xxxx}(\lambda) - 2 \mathrm w_x(\mu) \mathrm w_{xxx}(\lambda)\Big) + \\
    & + \frac{2}{(\mu - \lambda)(\mu - \nu)} \Big(m(\mu) \mathrm w_{xxx}(\nu) + 2 \sigma(\mu) \mathrm w_x(\nu) + \sigma_x(\mu) \mathrm w(\nu)\Big)\mathrm w_x(\lambda) + \\
    & + \frac{1}{(\mu - \lambda)(\lambda - \nu)} \Big(2 \sigma(\mu) \Big(\mathrm w_{xx}(\mu) \mathrm w(\nu) - \mathrm w(\mu) \mathrm w_{xx}(\nu)\Big) +  \sigma_x(\mu) \Big(\mathrm w_x(\mu) \mathrm w(\nu) - \mathrm w(\mu) \mathrm w_x(\nu)\Big) \Big) + \\
    & + \frac{1}{(\mu - \lambda)(\mu - \nu)} \Big(m(\mu) \mathrm w_{xxxx}(\nu) + 2 \sigma_x(\mu) \mathrm w_x(\nu) + 2 \sigma(\mu) \mathrm w_{xx}(\nu) + \sigma_{xx}(\mu) \mathrm w(\nu) + \sigma_{x}(\mu) \mathrm w_x(\nu)\Big) \mathrm w(\lambda)
    \end{aligned}
\end{equation*}
The identity
$$
\frac{1}{(\mu - \lambda)(\mu - \nu)} - \frac{1}{(\mu - \lambda)(\lambda - \nu)} + \frac{1}{(\mu - \nu)(\lambda - \nu)} =  0.
$$
implies that the r.h.s. of formula for $\mathcal L_{ \xi(\nu)}\mathcal L_{ \xi(\lambda)}\sigma(\mu)$ is symmetric in $\lambda, \nu$. Thus, $\mathcal L_{\xi(\nu)}$ and  $\mathcal L_{\xi(\lambda)}$  commute on $\sigma(\mu)$. 

Now recall that $\sigma(\mu)$ is constructed from a generic conservation law $f$. This implies that in the expansion 
$$
\sigma(\mu) = \sigma_0 + \mu \sigma_1 + \dots
$$
the differentials of $\sigma_0, \dots, \sigma_{n - 1}$ are linearly independent almost everywhere. Thus, one can take them as coordinates $u^i = \sigma_{i - 1}$ and in these coordinates $(\mathcal L_{\xi(\nu)}\mathcal L_{\xi(\lambda)} - \mathcal L_{\xi(\lambda)}\mathcal L_{\xi(\nu)})  u^i = 0$, as required.  This completes the proof of Theorem \ref{t1}.

To verify the statement of Theorem \ref{t2}, we first need to show that the coefficients $\xi_{s,\lambda_i}$, $\mathrm v_{s,\lambda_i}$ and $\mathrm w_{s,\lambda_i}$ of the $\varepsilon$-expansions \eqref{eq:expandxi} and \eqref{eq:expandvw}  
are well defined and are differential polynomials in $u^1,\dots, u^n$.   Indeed,  by definition,
\begin{equation}
\label{eq:proofth3}
\mathrm v (\lambda) = \mathrm v (\sigma(\lambda), m(\lambda)) = \sqrt{2} \sum_{s=0}^\infty \bigl(m(\lambda)\bigr)^s \mathcal v_s(\lambda),
\end{equation}
where $\mathcal v_s (\lambda)$ is a differential polynomial obtained from the homogeneous differential polynomial  $\mathcal u_{2s+1}(\sigma, \sigma_x, \sigma_{xx}, \dots)$ of degree $2s$ by substitution $\sigma = \sigma(\lambda; u)$.

We are interested in the expansion of $\mathrm v (\lambda_i + \varepsilon) = \sum \mathrm v_{s,\lambda_i} \varepsilon^s$ under the assumption that $m(\lambda_i) = 0$.  Since $m(\lambda_i + \varepsilon)= a_1 \varepsilon + a_2 \varepsilon^2 + \dots$ and, therefore, $\bigl( m(\lambda_i + \varepsilon)\bigr)^s = a_1^s \varepsilon^s +\dots$, we see from  expansion \eqref{eq:proofth3} that 
$\mathrm v_{s,\lambda_i}$ is defined from the first $s+1$ coefficients  $\mathcal v_0,\dots, \mathcal v_s$.  Hence,  $v_{s,\lambda_i}$ is a non-homogeneous differential polynomial of degree at most $2s$. 

The proof for  $\mathrm w_{s,\lambda_i}$ is literally the same.   The conclusion for $\mathrm w_{s,\lambda_i}$  immediately follows from the explicit formula of $\xi(\lambda)$ in terms of $\mathrm w(\lambda)$, see \eqref{eq:main} and \eqref{eq:xiformal}.

The explicit form \eqref{eq:explform} of the flows  $u_{t_{\lambda_i}} = \xi (\lambda_i)$ for a root $\lambda_i\in\mathbb C$ of $m(\lambda)$  is straightforward.   Indeed, setting $m(\lambda_i)=0$ in \eqref{eq:main}  gives  $\sigma(\lambda_i) \bigr(\mathrm w(\lambda_i)\bigr)^2 - 1 = 0$, or equivalently, $\mathrm w(\lambda_i) =  \frac{1}{\sqrt{\sigma(\lambda_i)}}$.  Substituting this expression into the first equation of \eqref{eq:main} gives \eqref{eq:explform}, as required.   Similarly, for $\lambda_i = \infty$ we 
set $\bar m(0) = 0$  in \eqref{eq:formal3}  to get  $q=\frac{1}{2} f$,  which after substitution into the first equation of \eqref{eq:formal3} immediately gives \eqref{eq:explforminfty}, completing the proof of Theorem \ref{t2}.



\section{Proof of Theorem \ref{t3}}\label{sect5}

Since $L$ is differentially non-degenerate,    this operator is diagonalisable almost everywhere.    At ``non-diagonalisable'' points,  the conclusion of Theorem \ref{t3} can be derived by continuity arguments.  So w.l.o.g. we assume that $L= \diag(x^1,\dots,x^n)$.  
Recall that $f$  from equation \eqref{eq:05}
is a conservation law for $L$.    Then $f$ is a sum of $n$ 
functions such that the $i$th function depends on $x^i$ only.  Therefore, for every $i$ the function   
$f_i:= \tfrac{\partial f}{\partial x^i}$     depends on $x^i$ only. 

Next, we consider relation \eqref{eq:05}. In coordinates, it reads:
\begin{eqnarray*} 
\frac{\partial \ln(\sigma)}{\partial x^1} & = &  \frac{f_1(x^1) }{ x^1 - \lambda} \\
 &\vdots& \\
\frac{\partial \ln(\sigma)}{\partial x^n} & = &  \frac{f_n(x^n)}{x^n - \lambda}. 
\end{eqnarray*} 
Hence,  the system \eqref{eq:05} of  $n$  PDEs  is actually a system of $n$  ODEs in different  variables.
Its  solution must  be of the form $\sigma = c(\lambda) \cdot \sigma_1 \dots \sigma_n$ with
\begin{equation}
\label{eq:pi}  
  \sigma_i(\lambda, x^i) = \exp\left(\int_{s_i}^{x^i} \frac{f_i(s)}{s - \lambda} \ddd s\right) 
\end{equation} 
where  $c(\lambda)$ is an arbitrary function and $(s_1,...,s_n)$ is an arbitrary point; we assume that all $s_i\ne 0$.  

Next, consider relation \eqref{zeta2}. For our $\sigma(\lambda)=  c(\lambda) \cdot \sigma_1 \dots \sigma_n$  it reads
 \begin{equation}
 \label{eq:matv3}
	\left(C + \sum_{i=1}^n\zeta^i \frac{ f_i }{x^i- \lambda}\right) \, \sigma_1 \dots \sigma_n = \frac{m(\lambda)}{c(\lambda)}:=\widehat m(\lambda).
\end{equation}	

In the left hand side of this relation,  $f_i$ and $\zeta_i$ are smooth functions in $x$ which are independent on $\lambda$, whereas the r.h.s. is a function independent of $x$.  The following statement shows that under these conditions, $f_i$'s have to be constants and, moreover, very special.

\begin{Lemma} \label{prop:3} 
The functions $f_i$ are integer constants different from zero and no greater than $1$. Moreover, for every $i$ such that    $f_i\ne 1$ we have $\zeta_i=0$. 
\end{Lemma}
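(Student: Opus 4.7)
The plan is to derive two key constraints on the $f_j$'s from identity \eqref{eq:matv3}: one algebraic, via a pole analysis after differentiating in $x^j$; and one analytic, via the local structure of $\sigma_j$ near $\lambda=x^j$. Throughout I use that the genericity assumption on $f$ translates, in the diagonal coordinates, into the non-vanishing of a Vandermonde-type determinant $\prod_i f_i(x^i)\cdot\prod_{i<j}(x^j-x^i)$, giving $f_i\neq 0$ on a neighbourhood of $p$.

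Setting $B(\lambda,x):=C+\sum_i\zeta^i f_i(x^i)/(x^i-\lambda)$, identity \eqref{eq:matv3} reads $B\,\sigma=\widehat m(\lambda)$. Differentiating in $x^j$ and dividing by $\sigma$ (using $\partial_{x^j}\ln\sigma = f_j(x^j)/(x^j-\lambda)$) gives the rational identity in $\lambda$
\[
\partial_{x^j}B(\lambda,x) \;+\; B(\lambda,x)\,\frac{f_j(x^j)}{x^j-\lambda} \;=\; 0.
\]
Expanding in partial fractions and extracting the coefficient of the double pole at $\lambda=x^j$ yields
\[
\zeta^j(x)\,f_j(x^j)\bigl(f_j(x^j)-1\bigr)\;=\;0.
\]
Since $f_j\neq 0$ in a neighbourhood of $p$, this gives the pointwise dichotomy $\zeta^j(x)=0$ or $f_j(x^j)=1$; in particular, if $f_j\neq 1$ at some point then by continuity $f_j\neq 1$ on an open neighbourhood, forcing $\zeta^j\equiv 0$ there.

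In the case $\zeta^j\equiv 0$, I would switch to the analytic approach. From \eqref{eq:pi}, a direct computation gives
\[
\sigma_j(\lambda,x^j)=(x^j-\lambda)^{f_j(x^j)}\cdot h_j(\lambda,x^j),
\]
with $h_j$ analytic and non-vanishing near $\lambda=x^j$; since $B$ is also analytic at $\lambda=x^j$ (as $\zeta^j=0$), the l.h.s.\ of \eqref{eq:matv3} equals $(x^j-\lambda)^{f_j(x^j)}$ times an analytic non-vanishing factor. The r.h.s.\ $\widehat m(\lambda)$ is meromorphic with $\lambda$-singularities not depending on $x$, so analyticity at the moving point $\lambda=x^j$ forces $f_j(x^j)\in\mathbb{Z}$; and $f_j(x^j)>0$ would force $\widehat m(x^j)=0$ for $x^j$ in an open interval, contradicting $m\not\equiv 0$. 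Combined with $f_j\neq 0$, this forces $f_j(x^j)$ to be a negative integer.

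In either case $f_j$ is smooth and integer-valued on an open neighbourhood, hence locally constant, hence constant on a connected neighbourhood of $p$. Collecting: each $f_i$ is an integer constant with $f_i\neq 0$ and $f_i\leq 1$, and $f_i\neq 1$ implies $\zeta^i\equiv 0$. The main technical subtlety is the residue analysis at the moving pole $\lambda=x^j$, where one must carefully track whether the potential simple pole of $B$ cancels the branching of $\sigma_j$ (the $f_j=1$ branch, in which the pole of $B$ and the simple zero of $\sigma_j$ combine to give an analytic product) or is absent, leaving the pure branching of $\sigma_j$ to constrain $f_j$ to a negative integer.
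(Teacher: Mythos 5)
Your proof is correct, but it reaches the conclusion by a partly different route than the paper. The paper works entirely with the order of zeros and poles of the left-hand side of \eqref{eq:matv3} at the moving point $\lambda=x^i$: integration by parts gives $\sigma_i=(x^i-\lambda)^{f_i(x^i)}\tilde\sigma_i$ with $\tilde\sigma_i$ bounded and non-vanishing, the requirement that $\widehat m(\lambda)$ have no moving zeros, poles or branch points then forces each $f_i$ to be an integer constant (see \eqref{eq:5}), and finally the order count in \eqref{eq:6} gives $f_i=1$ when $\zeta^i\ne0$ and $f_i<0$ when $\zeta^i=0$. You instead first differentiate $B\sigma=\widehat m$ in $x^j$, divide by $\sigma$, and read off the coefficient of the double pole at $\lambda=x^j$ to get the algebraic relation $\zeta^j f_j(f_j-1)=0$; this cleanly yields the dichotomy ``$\zeta^j=0$ or $f_j=1$'' \emph{without} first knowing that $f_j$ is an integer, and you only invoke the branching/moving-singularity argument (which coincides with the paper's) in the residual case $\zeta^j\equiv0$ to pin $f_j$ down to a negative integer. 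What your route buys is a purely rational-function computation for the hardest implication ($\zeta^j\ne0\Rightarrow f_j=1$); what it costs is the two-case structure and the continuity/patching step needed to globalize the pointwise dichotomy, which you do handle correctly. Two small points of precision: your phrase ``analytic non-vanishing factor'' is slightly too strong, since $B(\lambda,x)$ may vanish at $\lambda=x^j$ even when $\zeta^j=0$; this does not harm the argument (a zero of $B$ of non-negative integer order cannot repair the branching of $(x^j-\lambda)^{f_j}$ and only strengthens the contradiction $\widehat m(x^j)=0$ when $f_j>0$), but it should be said. Likewise one should note that the factors $\sigma_i$, $i\ne j$, are analytic at $\lambda=x^j$ for $x$ in a suitably small neighbourhood --- an implicit assumption the paper also makes.
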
 

\begin{proof} Integration by parts gives 
$$
\sigma_i(\lambda , x^i)= \exp\left(\int_{s_i}^{x^i} \frac{f_i(s)}{s -\lambda } \ddd s  \right) = 
\exp\left( f_i(x^i)\ln (x^i-\lambda)   - f_i(s_i)\ln (s_i-\lambda) - \int_{s_i}^{x^i} f'_i(s)  \ln(s-\lambda ) \ddd s\right)     
$$ 
implying $\sigma(x,\lambda )= (x^1-\lambda)^{f_1(x^1)} (x^2-\lambda)^{f_2(x^2)}\dots (x^n-\lambda)^{f_n(x^n)} \tilde \sigma(\lambda ,x)$ where the function $\tilde \sigma$ has neither zeros 
nor poles.  The equation \eqref{eq:matv3} reads then 
\begin{equation} 
\label{eq:5}  
\left(C + \sum_{s=1}^n \frac{\zeta^s f_s(x^s)}{x^s-\lambda} \right)  (x^1-\lambda)^{f_1(x^1)} (x^2-\lambda)^{f_2(x^2)}\dots (x^n-\lambda)^{f_n(x^n)} \tilde \sigma (\lambda , x) = \widehat m(\lambda). 
\end{equation} 
Note that  the function $ \int_{s_i}^{x^i} f'_i(s) \ln(s - \lambda) \ddd s$ is locally  bounded, so the function  $\tilde \sigma(\lambda , x)$  is bounded for small $x^s-\lambda$ and is not zero.   

 Assume for a certain $i$ that $f_i$ is not integer at a point $(x^1,...,x^n)$. Substituting $\lambda = x^i$ (for generic $x^i$) makes  $\widehat m(\lambda  )=0$ 
or $\widehat m(\lambda )=\infty$ leading to a contradiction.     Indeed, the  first factor of \eqref{eq:5} has  integer order of zeros  and poles, so to compensate it $f_i$ must be integer for each $x^i$ and therefore constant. Note that the case $\widehat m(\lambda ) \equiv \infty $ is not allowed since the left hand side  is finite almost everywhere. 
 
Thus, all $f_i$ are integer. Then the function $\tilde \sigma(\lambda, x) $ depends on $\lambda$ only.  Further we assume that it is equal to $1$ since we can ``hide'' it in $\widehat m(\lambda)$ (we keep the same notation).   
Our equation  \eqref{eq:5}    then becomes: 
\begin{equation} 
\label{eq:6}  
\left(C+ \sum_{s=1}^n \frac{ \zeta^s f_s}{x^s-\lambda} \right)  (x^1-\lambda)^{f_1} (x^2-\lambda)^{f_2}\dots (x^n-\lambda)^{f_n}  = \widehat m(\lambda ),  \end{equation} 
where $f_i$ are  integer constants. 
 If  $\zeta^i\ne 0$ then the first factor of \eqref{eq:6} has a pole of order $1$  implying    $f_i=1$.  If $\zeta^i= 0$ and  $f_i> 0$, then  \eqref{eq:6} has zero
 for $\lambda =x^i$ which again leads to contradiction. It remains to notice that $f_i\ne 0$ since 
  $\ddd f, L^*\ddd f,\dots ,L^{n-1}\ddd f$ are linearly independent by our assumption.    \end{proof}

W.l.o.g. we assume that $f_i=1$ for    $i=1,...,k$ and the other $f_i$
 are negative  integers; we set $\ell_{k+1}= -f_{k+1},\dots ,\ell_{n}= -f_{n}$.   The equation  \eqref{eq:6} reads then 
\begin{equation} 
\label{eq:7}  
\left(C+ \sum_{s=1}^k \frac{\zeta^s }{x^s-\lambda} \right) =  
\frac{(x^{k+1}-\lambda)^{\ell_{k+1}}\dots  (x^{n}-\lambda)^{\ell_n}   }{(x^1-\lambda)(x^2-\lambda)\dots(x^k-\lambda)} \, \widehat m(\lambda).   
\end{equation} 
Notice that the expression in the l.h.s. can be written as a rational function in $\lambda$ of the form 
$\frac{P(\lambda)}{\prod (x^s - \lambda)}$, where $P(\lambda)=C (-\lambda)^k + \dots$ is a polynomial of degree at most $k$.   Similarly, the r.h.s. is $\frac{F(\lambda)}{\prod (x^s - \lambda)}$ where 
$F(\lambda) = (x^{k+1}-\lambda)^{\ell_{k+1}}\dots (x^{n}-\lambda)^{\ell_n}  \widehat m(\lambda)$.  Since $P(\lambda)=F(\lambda)$ we conclude that $\widehat m (\lambda) = m_d \lambda^d + \dots$ is a polynomial of degree at most $d= k -\ell_{k+1} - \dots - \ell_n$. In particular,  $d\ge 0$ and $m_d = (-1)^d C$.

Finally, it remains to notice that $\zeta^i$ can be found from \eqref{eq:7}  by using the partial fraction decomposition theorem which gives:
\begin{equation}
\label{eq:bols8}
\zeta^i = \frac{(x^{k+1} - x^i)^{\ell_{k+1}} \dots  (x^{n} - x^i)^{\ell_{n}} \widehat m(x^i)     }{\prod_{s=1, s\ne i}^k (x^s-x^i)}, \qquad  i = 1,\dots, k.
\end{equation}

Summarising this discussion we conclude that for $L=\diag(x^1,\dots, x^n)$,  we have  (up to scaling with a factor $c(\lambda)$)
$$
\begin{aligned}
f(x) &=\sum_{i=1}^k x^i - \sum_{i=k+1}^n \ell_i x^i \\
\sigma(\lambda, x) &=  \frac{\prod_{i=1}^k (x_i -\lambda)}{ \prod_{i=k+1}^n  (x_i - \lambda)^{\ell_i}}
\end{aligned}
$$
and $\zeta$ given by \eqref{eq:bols8}. 

To get the conclusion of Theorem \ref{t3} in its final form, we only need to combine the coordinates $x^{i}$'s,  $i=k+1,\dots, n$ into  groups depending on the values of  the exponents $\ell_i$  ($x^i$ and $x^j$ go to the same group iff  $\ell_i=\ell_j$).   After appropriate renumbering, we come to the desired description of $L$,  $\sigma$, $m$ and $\zeta$.

\section{Conclusion} 

In our paper, we  constructed new explicit families of integrable multi-component  evolutionary  equations with and without differential constraints, see Section \ref{sect1.1}.  The equations $u_{t_\lambda}= \xi (\lambda; u, u_x, u_{xx},\dots)$ within each family are parametrised by $\lambda\in \bar{\mathbb C}=\mathbb C \cup\{\infty\}$.  The corresponding (formal) evolutionary flows pairwise commute for all values of parameters and admit a family of common (formal) conservation laws $\mathrm v(\mu; u, u_x, u_{xx},\dots)$ also parametrised by $\mu\in \bar{\mathbb C}$.  For specific values of  $\lambda$  (namely, roots $\lambda_1,\dots, \lambda_d$ of a certain polynomial)  the above equation generates an hierarchy of usual (i.e. non-formal) commuting flows $\xi_{s,\lambda_i}$, $s=0,1,\dots$, defined by means of a differential polynomial of degree $2s+3$.  The equation $u_{t_{\lambda_i}} = \xi (\lambda_i; u, u_x, u_{xx}, u_{xxx})$,  initial term of this hierarchy, is a third order PDE system given by an elegant explicit formula.  All the subsequent terms can be found by means of explicit recurrent formulas.  Similar for conservation laws:   $\mathrm v(\lambda_i)$ generates an hierarchy of common polynomial conservation laws  $\mathrm v_{s,\lambda_i}$, $s=0,1,\dots$, for all the flows,  where $\mathrm v_{s,\lambda_i}$ is a differential polynomial of degree $2s$ that can be found explicitly by an iterative procedure.
These  families of integrable equations, for a simple choice of parameters, include and generalise many known examples of integrable systems. Some of multi-component  evolutionary  equations we constructed are essentially  new and they have no low-component analogues.

The construction is based on a new approach, which is rather differential-geometric than algebraic (in contrast to  many other  constructions of integrable systems which are often based on algebraic or algebraic-geometric concepts). Our results have been naturally obtained within the Nijenhuis Geometry programme initiated in \cite{nij}.  This suggests that Nijenhuis Geometry might be a convenient framework for studying further
properties of the constructed systems and generalizing them.   Because of its differential-geometric nature, our  constructions are invariant with respect to  the choice of coordinates on $\mathrm  M^n$; that is, the systems behave covariantly if we 
change unknown functions $u$  by a diffeomorphism $u_{\mathsf{new}}= u_{\mathsf{new}}(u_{\mathsf{old}})$. One can use this fact in the search of applications of our systems in natural sciences. Examples discussed in Section \ref{sect2.3} actually 
suggest that `physically relevant' variables correspond to those coordinates on $\mathrm  M^n$ in which the Nijenhuis operator has a nice form, e.g., the   `companion' and `diagonal' forms from \eqref{dnd}, or the form in which the components of $L$ are linear in coordinates. 

The famous integrable systems that we generalise to an arbitrary number of components (such as KdV, Camassa-Holm, Dullin-Gottwald-Holm, Harry Dym, Kaup-Boussinesq) have been intensively studied for decades; for these studies, a number of non-trivial geometric, algebraic and analytical methods were invented and successfully applied. The next natural step would be to figure out  how to adapt  these methods to new systems. In particular, it would be interesting to construct Lax representations for new systems, to find explicit solutions by the inverse scattering method, to construct a recursion operator  and, of course, to find physically relevant models that are described by new systems. We invite our fellow mathematicians and physicists to join this research.

\end{document}